\title{Faster Prefix-Sorting Algorithms for Deterministic Finite Automata}
\author{Sung-Hwan Kim}{DAIS, Ca' Foscari University of Venice, Italy}{sunghwan.kim@unive.it}{https://orcid.org/0000-0002-1117-5020}{Funded by the European Union (ERC, REGINDEX, 101039208).}
\author{Francisco Olivares}{CeBiB --- Centre for Biotechnology and Bioengineering \\
Department of Computer Science, University of Chile, Chile}{folivares@uchile.cl}{https://orcid.org/0000-0001-7881-9794}{Funded by Ph.D Scholarship 21210579, ANID, Chile.}
\author{Nicola Prezza}{DAIS, Ca' Foscari University of Venice, Italy}{nicola.prezza@unive.it}{https://orcid.org/0000-0003-3553-4953}{Funded by the European Union (ERC, REGINDEX, 101039208). Views and opinions expressed are however those of the author(s) only and do not necessarily reflect those of the European Union or the European Research Council. Neither the European Union nor the granting authority can be held responsible for them.}
\authorrunning{S.-H. Kim, F. Olivares, N. Prezza} %TODO mandatory. First: Use abbreviated first/middle names. Second (only in severe cases): Use first author plus 'et al.'
\keywords{String Matching, Deterministic Finite Automata, Graph Indexing, Co-lexicographical Sorting} %TODO mandatory; please add comma-separated list of keywords
\begin{document}

\maketitle

%TODO mandatory: add short abstract of the document
\begin{abstract}

Sorting is a fundamental algorithmic pre-processing technique which often allows to represent data more compactly and, at the same time, speeds up search queries on it. In this paper, we focus on the well-studied problem of sorting and indexing string sets. Since the introduction of suffix trees in 1973, dozens of suffix sorting algorithms have been described in the literature. In 2017, these techniques were extended to sets of strings described by means of finite automata: the theory of Wheeler graphs [Gagie et al., TCS'17] introduced automata whose states can be \emph{totally}-sorted according to the co-lexicographic (co-lex in the following) order of the prefixes of words accepted by the automaton. More recently, in [Cotumaccio, Prezza, SODA'21] it was shown how to extend these ideas to arbitrary automata by means of \emph{partial} co-lex orders. This work showed that a co-lex order of minimum width (thus optimizing search query times) on deterministic finite automata (DFAs) can be computed in $O(m^2 + n^{5/2})$ time, $m$ being the number of transitions and $n$ the number of states of the input DFA.

In this paper, we exhibit new combinatorial properties of the minimum-width co-lex order of DFAs and exploit them to design faster prefix sorting algorithms. In particular, we describe two algorithms sorting arbitrary DFAs in $O(mn)$ and $O(n^2\log n)$ time, respectively, and an algorithm sorting acyclic DFAs in $O(m\log n)$ time. 
Within these running times, all algorithms compute also a smallest chain partition of the partial order (required to index the DFA). 
We present an experiment result to show that an optimized implementation of the $O(n^2\log n)$-time algorithm exhibits a nearly-linear behaviour on large deterministic pan-genomic graphs and is thus also of practical interest.
\end{abstract}

%%%%% Section 1: Introduction
\section{Introduction}

In this paper, we study the problem of indexing string sets for \emph{pattern matching} queries: pre-process a set $\mathcal L \subseteq \Sigma^*$ of strings from a finite alphabet $\Sigma$ so that later we can efficiently answer queries of the form ``is a given query pattern $P\in \Sigma^*$ substring of some string in $\mathcal L$?''. 

Clearly, an algorithmic solution to this problem requires the set $\mathcal  L$ to be representable in \emph{finite} space (even though $\mathcal  L$ itself could contain an \emph{infinite} number of strings); in this paper, we focus on string sets described by finite state automata, that is, on  regular languages. Our results build on a successful line of previous research based on the following idea: after sorting all prefixes $Pref(\mathcal L)$ of the strings in $\mathcal L$ in colexicograpic (co-lex for brevity) order\footnote{Historically, the lexicographic order of suffixes was used first; however, with finite state automata the symmetric co-lex order of $Pref(\mathcal L)$ turns out to be more natural.}, pattern matching queries translate to finding the strings in $Pref(\mathcal L)$ that are suffixed by pattern string $P$. Being $Pref(\mathcal L)$ co-lex sorted, those strings form a range in co-lex order; notice that, if the sorted $Pref(\mathcal L)$ is explicitly stored, such a range can be easily found by binary search.
Recall, however, that (due to limited available working space) we work with a 
particular representation of $\mathcal L$: a finite state automaton $\mathcal A$. This requires re-formulating the pattern matching problem on  $\mathcal A$.
It is easy to see that pattern matching queries on $\mathcal L$ translate to finding paths of  $\mathcal A$ whose labels, when concatenated, form $P$.
When using $\mathcal A$ to index $\mathcal L$, the main question becomes therefore ``how does the total co-lex order on $Pref(\mathcal L)$ map onto the states of $\mathcal A$?''. In particular cases, such a mapping yields a total order among $\mathcal A$'s states. This happens, for example, when $\mathcal A$ is a path (i.e. a string; corresponding data structures include the suffix tree \cite{weiner1973linear}, the suffix array \cite{manber:soda90,gonnet1992new}, and the FM-index \cite{ferragina2000opportunistic}), a finite set of disjoint paths (eBWT \cite{mantaci2007extension}), or a labeled arborescence (XBWT \cite{xbwt-focs}). A total order on the states of $\mathcal A$ is obtained even in particular cases where $\mathcal A$ may accept an infinite language: this is the case, for example, of de Bruijn graphs (BOSS \cite{BOSS}) and Wheeler graphs \cite{gagie:tcs17} (the latter generalize all the above classes of totally-sortable labeled graphs). 

More recently, in \cite{cotumaccio:soda21,cotumaccio:arxiv22} it was shown that in the general case (arbitrary NFAs) the total co-lex order on $Pref(\mathcal L)$ maps very naturally onto a family of \emph{partial co-lex orders} among the states of $\mathcal A$. Such a family contains only one order for any given DFA, while NFAs may admit multiple admissible co-lex orders. 
Letting $p$ be the \emph{width} of a smallest-width partial order $<_{\mathcal A}$, it was shown that pattern matching queries on $\mathcal L$ can be solved in time $\tilde O(p^2)$ per query character\footnote{The notation $\tilde O$ hides factors polylogarithmic in the size of $\mathcal A$.}. Note that this generalizes the total order case $p=1$, where indeed queries take $\tilde O(1)$ time using the aforementioned solutions (e.g. indexes on strings and labeled trees). Building the index of \cite{cotumaccio:soda21,cotumaccio:arxiv22} requires the computation of a smallest \emph{chain partition} for the co-lex order $<_{\mathcal A}$, i.e. a minimum-size partition $C_1, \dots, C_p$ of $\mathcal A$'s states such that $(C_i,<_{\mathcal A})$ is a total order for each $i=1, \dots, p$ (note that the index does not require the order $<_{\mathcal A}$ itself, just a chain partition).
Letting $n$ and $m$ be the number of states and transitions of $\mathcal A$, respectively, \cite{cotumaccio:soda21} showed how to build such a chain partition in $O(n^{5/2} + m^2)$ time in the case where $\mathcal A$ is a deterministic finite automaton (DFA). The work \cite{cotumaccio:arxiv22} presented a solution running in $\tilde O(m^2)$ time w.h.p.
In the general nondeterministic (NFA) case, the problem is known to be NP-complete\footnote{Hardness follows from hardness of the $p=1$ case \cite{DBLP:conf/esa/GibneyT19}, while membership in NP follows from the fact that the properties defining a co-lex order can be checked in polynomial time, given a candidate order.}, even though polynomial algorithms do exist for co-lex \emph{pre-orders} \cite{cotumaccio2022} (which still allow indexing and whose width is never larger than that of co-lex orders).

\subsection{Our results}

In this work, we focus on the problem of computing the smallest-width partial co-lex order $<_{\mathcal A}$ when the input is a DFA. On DFAs, $<_{\mathcal A}$ has a very intuitive definition: letting $u,v$ be states of
$\mathcal A$, we have $u <_{\mathcal A} v$ if and only if $\alpha < \beta$ for every $\alpha \in I_u$ and $\beta \in I_v$, where $<$ denotes the
co-lex order among strings and $I_u$ denotes the set of strings (in fact, a regular language) labeling all paths from the source of $\mathcal A$ to $u$.
We first observe that $<_{\mathcal A}$ is completely specified by 
pairs $(\inf I_u,\sup I_u)$ over the co-lex sorted $Pref(\mathcal L)$: in fact, we prove that $u <_{\mathcal A} v$ holds if and only if $\sup I_u \leq \inf I_v$.
This allows finding a smallest chain decomposition of $<_{\mathcal A}$ in $O(n)$ time through a solution of the \emph{interval partitioning problem}, given that the co-lex ranks of strings $\inf I_u$ and $\sup I_u$ are known for each state $u$. 
Observing that these strings can be easily encoded with two pruned versions of the DFA $\mathcal A$, this leaves the problem of computing and sorting them --- ideally, in $\tilde O(m)$ time.
We give three different solutions for this problem, which could be of independent interest. The first two solutions work on arbitrary DFAs and run in time $O(mn)$ and $O(n^2\log n)$, respectively. The latter of these two solutions is based on \emph{suffix doubling}, the technique at the core of the first suffix array construction algorithm \cite{manber:soda90}, and is close to optimal on dense graphs. 
We show that an optimized implementation of this algorithm exhibits a sub-quadratic behaviour on large deterministic pan-genomic graphs (in fact, we experimentally observe a linearithmic running time). 
The third solution works on acyclic DFAs, runs in $O(m\log n)$ time,
and  generalizes a well-known algorithm for building the Burrows-Wheeler transform in an online fashion; in our case, we process the automaton's states in any topological order and, for each processed state $u$, compute $\inf I_u$ and $\sup I_u$ using the results computed on the already-processed states.

%%%%% Section 2: Preliminaries
\section{Preliminaries}

Notation $[i,j]$, where $i,j\in \mathbb N$, denotes the integer set $\{i,i+1, \dots, j\}$ (if $i>j$, then $[i,j] = \emptyset$).
Let $\Sigma$ be a finite alphabet. A \emph{finite string} $\alpha\in \Sigma^*$ (or \emph{string of finite length}) is a finite concatenation of characters from $\Sigma$. 
The notation $|\alpha|$ indicates the length of the string $\alpha$.
The symbol $\epsilon$ denotes the empty string. 
The notation $\alpha[i]$ denotes the $i$-th character from the beginning of $\alpha$; indices start from 1, so $\alpha[1]$ is the first character of $\alpha$.
Letting $\alpha,\beta\in\Sigma^*$, $\alpha\cdot \beta$ (or simply $\alpha\beta$) denotes the concatenation of strings.
The notation $\alpha[i..j]$ denotes $\alpha[i]\cdot \alpha[i+1]\cdot\ \dots\ \cdot \alpha[j]$; if $i>j$, then $\alpha[i..j]$ is the empty string $\epsilon$. 
The notation $\alpha \sqsubseteq \beta$, where $\alpha,\beta\in\Sigma^*$, indicates that $\alpha$ is a prefix of $\beta$, i.e. $\alpha = \beta[1..i]$ for some $i\leq |\beta|$.
An \emph{$\omega$-string} $\beta \in \Sigma^\omega$ (or \emph{infinite string} / \emph{string of infinite length}) is an infinite numerable concatenation of characters from $\Sigma$. 
In this paper, we work with \emph{left-infinite} $\omega$-strings, meaning that $\beta \in \Sigma^\omega$ is constructed from the empty string $\epsilon$ by prepending an infinite number of characters to it. In particular, the operation of appending a character $a\in\Sigma$ at the end of a $\omega$-string $\alpha \in \Sigma^\omega$ is well-defined and yields the $\omega$-string $\alpha a$. 
The notation $\alpha^\omega$, where $\alpha \in \Sigma^*$, denotes the concatenation of an infinite (numerable) number of copies of string $\alpha$.

\begin{definition}
    A Deterministic Finite-State Automaton (DFA) is a quintuple  $\mathcal A = (Q, \Sigma, \delta, s, F) $ where $Q$ is the finite set of states, $\Sigma$ is a finite alphabet, $\delta : Q \times \Sigma \rightarrow Q$ is the transition function, $s\in Q$ is the initial state, and $F \subseteq Q$ is the set of final states. 
\end{definition}

As is customary, we extend the transition function to words $\alpha\in \Sigma^*$ as follows: for $a\in \Sigma$, $\alpha\in \Sigma^*$, and $q\in Q$: $\delta(q,a\cdot \alpha) = \delta(\delta(q,a),\alpha)$ and $\delta(q,\epsilon) = q$.
By $\delta^{-1}(u)$, we denote the set of states from which there exists a transition to $u$: i.e. $\delta^{-1}(u)=\{v\in Q : (\exists a\in\Sigma)(\delta(v,a)=u)\}$.

In the rest of the paper, $n= |Q|$  denotes the number of states and $m = |\delta| = |\{(u,v,a)\in Q\times Q\times \Sigma :\delta(u,a)=v\}|$ the number of transitions of the DFA under consideration. 

Following \cite{alanko:soda20}, we use the following notation for the set of words reaching a given state:

\begin{definition}\label{def:I_q}
Let $\mathcal A = (Q, \Sigma, \delta, s, F) $ be a DFA.
If $q\in Q$, let $I_q$ be the set of words \emph{reaching} $q$ from the initial state: 
$$
		I_q =\{\alpha \in \Sigma^* : q = \delta(s, \alpha)\};
$$
$I_q$ is also called \emph{the regular language recognized by $q$}.
\end{definition}

The \emph{language $\mathcal L(\mathcal A)$ recognized by $\mathcal A$} is defined as $\mathcal L(\mathcal A) = \cup_{q\in F}I_q$.

The co-lexicographic (or co-lex) order of two strings $\alpha,\beta\in \Sigma^* \cup \Sigma^\omega$ is defined as follows. (i) $\epsilon < \alpha$ for every $\alpha \in \Sigma^+ \cup \Sigma^\omega$, and (ii) if $\alpha = \alpha'a$ and $\beta=\beta'b$ (with $a,b\in \Sigma$ and $\alpha',\beta'\in \Sigma^* \cup \Sigma^\omega$), $\alpha < \beta$ holds if and only if $(a<b)  \vee (a=b \wedge \alpha' < \beta')$.
In this paper, the symbols $<$ and $\le$ will be used to denote the total order between the alphabet's characters, the co-lexicographic order between strings/$\omega$-strings, and the co-lex partial order among the states of an automaton (Definition \ref{def:colex}).
The meaning of symbols $<$ and $\le$ will always be clear from the context.
In all cases, the symbol $\le$ has the following meaning: $x\le y$ if and only if $x<y$ or $x=y$ (i.e. $x<y$ or $x$ and $y$ are the same state, the same character, or the same string, depending on the context).

Let $\mathcal A = (Q, \Sigma, \delta, s, F) $ be a DFA.
We assume that $s$ has no incoming edges; any automaton can always be transformed into an equivalent automaton with this property. 
We also assume that every state is reachable from the source: for every $v\in Q$, there exists $\alpha\in\Sigma^*$ such that $\delta(s,\alpha)=v$.
Moreover, we assume \emph{input consistency}: for every $u,v,v'\in Q$ and $c,c'\in\Sigma$, if $\delta(v,c)=\delta(v',c')=u$, then $c=c'$. We denote with $\lambda(v)$ such a uniquely-defined character and take $\lambda(s) = \#$ for the source $s$, where $\# \notin \Sigma$ is such that $\#<c$ for every $c\in\Sigma$. Note that input consistency is equivalent to working with state-labeled automata. Also this assumption is not too restrictive, since any automaton can be converted into an equivalent input-consistent automaton by just multiplying its size by a factor of $|\Sigma|$.

The following concepts can be defined more in general for NFAs (see \cite{cotumaccio:soda21}), but for the purposes of this article it will be sufficient to introduce them just on DFAs:

\begin{definition}\label{def:colex}
	Let $\mathcal A = (Q, \Sigma, \delta, s, F) $ be a DFA. A \emph{ co-lex order} on $ \mathcal A$ is a partial order $ \le $ on $ Q $ that satisfies the following two axioms:
	\begin{enumerate}
		\item (Axiom 1) For every $ u, v \in Q $, if $ u < v $, then $ \lambda(u) \leq \lambda(v) $;
		\item (Axiom 2) For every $ a \in \Sigma $ and $ u, v, u', v' \in Q $, if $ u = \delta (u', a) $, $ v = \delta (v', a) $ and $ u < v $, then $ u' \le v' $.
	\end{enumerate}
\end{definition}

The \emph{width} of a partial order is the size of its largest antichain or, equivalently by Dilworth's theorem \cite{dilworth1987decomposition}, the size of a smallest chain partition of the order. 

\begin{definition} \label{def:width} The \emph{co-lex width}   of a DFA $\mathcal A$    is the  minimum  width of a co-lex order on $\mathcal A$: 
	$$\text{width}(\mathcal A)=\min \{\text{width}(\leq):\ \leq  \text{is a co-lex order on $\mathcal A$}\}$$
\end{definition}

On DFAs, the following co-lex order is of particular interest:

\begin{definition}\label{def:prec_DFA} Let $\mathcal A$ be a DFA. The relation $<_{\mathcal A}$ over $Q$  is  defined by:
	$$u <_{\mathcal A} v \text{ if and only if } (\forall \alpha \in I_u) ( \forall \beta \in I_v) ~(\alpha < \beta).    $$
\end{definition}

In fact, by \cite[Lem. 1]{cotumaccio:arxiv22} the following holds:

\begin{lemma}\label{lem:maximumorderDFAs}
	If  $\mathcal A$ is a DFA, then $<_{\mathcal A}$ is a co-lex order on $\mathcal A$ and $\text{width}(<_{\mathcal A}) = \text{width}(\mathcal A)$. The order $<_{\mathcal A}$ is called the \emph{maximum} co-lex order on $\mathcal A$.
\end{lemma}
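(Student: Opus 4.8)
The plan is to establish three facts and combine them: (i) $<_{\mathcal A}$ is a strict partial order on $Q$, so that its reflexive closure $\le_{\mathcal A}$ is a genuine partial order; (ii) $\le_{\mathcal A}$ satisfies Axioms~1 and~2 of Definition~\ref{def:colex}, hence is a co-lex order; and (iii) $<_{\mathcal A}$ is \emph{maximum} among co-lex orders, meaning that $u \le v$ implies $u \le_{\mathcal A} v$ for every co-lex order $\le$ on $\mathcal A$. Granting (i)--(iii), the width identity follows at once: since $<_{\mathcal A}$ is itself a co-lex order, $\text{width}(\mathcal A)\le\text{width}(<_{\mathcal A})$; conversely, by (iii) every antichain of $<_{\mathcal A}$ is also an antichain of every co-lex order $\le$ (if two states are comparable in $\le$ they are comparable in $<_{\mathcal A}$, so incomparability in $<_{\mathcal A}$ forces incomparability in $\le$), hence $\text{width}(\le)\ge\text{width}(<_{\mathcal A})$ for all co-lex orders $\le$ and therefore $\text{width}(\mathcal A)\ge\text{width}(<_{\mathcal A})$.

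For (i) I would use the standing assumption that every state is reachable, so that $I_q\neq\emptyset$ for all $q$: irreflexivity is then clear (it would require $\alpha<\alpha$ for some $\alpha\in I_q$), asymmetry is immediate, and transitivity follows by routing the comparison through any witness string in the middle language. For (ii), Axiom~1 holds because, by input consistency, every nonempty $\alpha\in I_u$ ends with $\lambda(u)$ while $I_s=\{\epsilon\}$ with $\lambda(s)=\#$ minimal; thus $\alpha<\beta$ for $\alpha\in I_u$, $\beta\in I_v$ forces $\lambda(u)\le\lambda(v)$. For Axiom~2, given $\delta(u',a)=u$, $\delta(v',a)=v$ and $u<_{\mathcal A} v$, I would argue by contradiction: if $u'\not\le_{\mathcal A} v'$ then $u'\neq v'$ and $u'\not<_{\mathcal A} v'$, and since on a DFA distinct states have disjoint languages there exist $\alpha'\in I_{u'}$, $\beta'\in I_{v'}$ with $\beta'<\alpha'$; appending $a$ gives $\beta' a<\alpha' a$ with $\alpha' a\in I_u$ and $\beta' a\in I_v$, contradicting $u<_{\mathcal A} v$. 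The only ingredients are that appending a fixed character is monotone for the co-lex order and that $\delta$ is a function.

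The crux is (iii). I would fix a co-lex order $\le$ on $\mathcal A$, take $u\le v$ with $u\neq v$, and prove $\alpha<\beta$ for every $\alpha\in I_u$ and $\beta\in I_v$ by induction on $|\alpha|+|\beta|$. If $\alpha=\epsilon$ (i.e.\ $u=s$) then $\beta\neq\epsilon$ (otherwise $u=v$) and $\alpha<\beta$ trivially; the case $v=s$ cannot arise, since Axiom~1 would force $\lambda(u)\le\#$. Otherwise write $\alpha=\alpha'' a$ and $\beta=\beta'' b$ with $a=\lambda(u)$, $b=\lambda(v)$; Axiom~1 gives $a\le b$, and if $a<b$ we are done. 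If $a=b$, put $u'=\delta(s,\alpha'')$ and $v'=\delta(s,\beta'')$; determinism forces $u'\neq v'$ (else $u=\delta(u',a)=\delta(v',a)=v$), and Axiom~2 gives $u'\le v'$, hence $u'<v'$. Since $\alpha''\in I_{u'}$, $\beta''\in I_{v'}$, and $|\alpha''|+|\beta''|<|\alpha|+|\beta|$, the inductive hypothesis yields $\alpha''<\beta''$, and appending $a=b$ gives $\alpha<\beta$, completing the induction and hence the proof of (iii). I expect the bookkeeping in this induction --- in particular cleanly isolating the source state and using determinism to rule out $u'=v'$ --- to be the main obstacle; everything else is a short verification. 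Combining (iii) with the antichain argument of the first paragraph closes the proof.
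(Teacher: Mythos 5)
Your proposal is correct, but note that this paper does not prove Lemma~\ref{lem:maximumorderDFAs} at all: it is imported verbatim from \cite{cotumaccio:arxiv22} (Lem.~1), so there is no in-paper argument to compare against. Your three-step route --- verifying that $<_{\mathcal A}$ is a strict partial order whose reflexive closure satisfies Axioms~1 and~2, proving maximality of $<_{\mathcal A}$ by induction on $|\alpha|+|\beta|$ using input consistency, determinism, and the disjointness of the languages $I_u$, and then deducing the width equality from the fact that any relation containing another has antichains contained in the other's --- is essentially the standard argument given in that reference, and all the individual steps (including ruling out $v=s$ via Axiom~1 and ruling out $u'=v'$ via determinism) check out.
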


Computing the smallest-width co-lex order is of interest because, as shown in \cite{cotumaccio:arxiv22,cotumaccio:soda21}, there exists a linear-space index over any DFA $\mathcal A$ answering \emph{subpath queries} (find all the states of $\mathcal A$ reached by a path labeled with a given query string $P$) in time proportional to $\text{width}(\mathcal A)^2$ time per query character. In fact, the index is even compressed and uses $\log(\text{width}(\mathcal A)) + \log|\Sigma| + O(1)$ bits per transition of $\mathcal A$.
Building such an index requires computing a smallest-size chain partition of $<_{\mathcal A}$. State-of-the art algorithms for this problem run in time $O(m^2 + n^{5/2})$ \cite{cotumaccio:soda21} and $\tilde O(m^2)$ w.h.p. \cite{cotumaccio:arxiv22}. The goal of our paper is to improve these bounds by exploiting a new characterization for $<_{\mathcal A}$, introduced in the next section. 

%%%%% Section 3
\section{A new characterization of  the maximum co-lex order of a DFA}\label{sec:new characterization}

In this section, we give a new 
 interval-based characterization of the maximum co-lex order $<_{\mathcal A}$ of a DFA. We show that this yields an $O(n)$-space representation for $<_{\mathcal A}$ (observe that, in general, a partial order requires $O(n^2)$ space to be represented) and that, given this representation, one can compute a smallest chain partition of $<_{\mathcal A}$ in linear $O(n)$ time. 
 
\subsection{Infimum and supremum strings}

Let $u$ be a state of a DFA $\mathcal{A}=(Q,\Sigma,\delta,s,F)$. For the set $I_u$ of strings recognized by $u\in Q$, consider a (possibly infinite) string $\beta$ such that $\beta$ is a lower bound of $I_u$; i.e. $\beta\le\alpha$ for every $\alpha\in I_u$. Consider the co-lex largest string $\gamma$ among such lower bounds of $I_u$. We call such a string the $\emph{infimum string}$ of $u$, and denote it by $\inf I_u$. Similarly, we define the supremum string $\sup I_u$ of $u$ as the least upper bound of $I_u$; see Figure \ref{fig:dfa} for an example.

\begin{definition}[Infimum and supremum strings]
Let $u\in Q$ be a state of a DFA $\mathcal{A}=(Q,\Sigma,\delta,s,F)$. The infimum string $\inf I_u$ and the supremum string $\sup I_u$ are defined as:
\begin{align*}
\inf I_u &= \gamma\in\Sigma^*\cup\Sigma^\omega \mbox{ s.t. } (\forall \beta\in\Sigma^*\cup\Sigma^\omega \mbox{ s.t. } (\forall\alpha\in I_u~\beta\le\alpha)~ \beta\le\gamma) \\
\sup I_u &= \gamma\in\Sigma^*\cup\Sigma^\omega \mbox{ s.t. } (\forall \beta\in\Sigma^*\cup\Sigma^\omega \mbox{ s.t. } (\forall\alpha\in I_u~\alpha\le\beta)~ \gamma\le\beta) 
\end{align*}
\end{definition}

\begin{figure}[th!]
\centering
\subfloat{
\includegraphics[page=2,width=60mm]{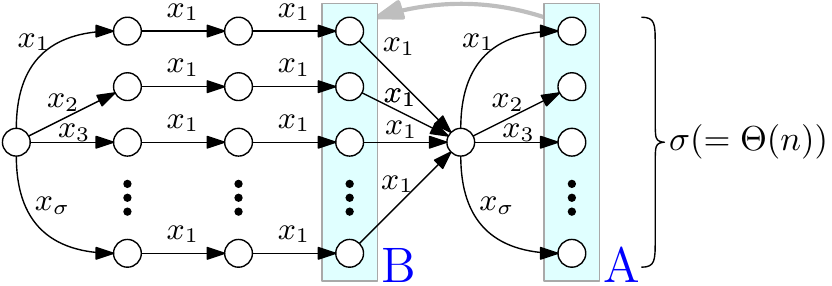}
}
\subfloat{
\begin{tabular}{crr|crr}\hline
$i$ & $\inf I_{v_i}$ & $\sup I_{v_i}$ & 
$i$ & $\inf I_{v_i}$ & $\sup I_{v_i}$ \\\hline 
1 & $\epsilon$ & $\epsilon$ &
6 & b & ab \\
2 & a & a & 
7 & bb & abb \\
3 & aa & abbaa &
8 & aaa & abba \\
4 & aab & b$^\omega$ &
9 & aaab & abbab \\
5 & aaba & b$^\omega$a &
10 & aabaa & abbaba 
\\\hline
\end{tabular}
}
\caption{Example DFA with its infimum/supremum strings.
}
\label{fig:dfa}
\end{figure}

As a warm up, we make several observations on $I_u$, infimum, and supremum strings.
\begin{observation} \label{observation:minmax_string}
Let $\mathcal A=(Q,\Sigma,\delta,s,F)$ be a DFA. For any $u\in Q$, the following hold:

\begin{enumerate}
\item For every $\alpha\in I_u$, $\alpha$ is finite. \label{obs:2}
\item For any $v(\neq u)\in Q$, $I_u\cap I_v$ is the empty set. \label{obs:3:empty:intersection:IuIv}
\item For any finite suffix $\alpha\in\Sigma^*$ of $\inf I_u$ (or $\sup I_u$), there exists $\beta\in\Sigma^*$ such that $\beta \alpha\in I_u$. \label{obs:4:sufofIu:exist:betaalphaIu}
\item $\inf I_u\in I_u$ if and only if $\inf I_u$ is a finite string; similar for $\sup I_u$. \label{obs:7:finite:Iu}
\item $I_u$ 
is a singleton if and only if $\inf I_u=\sup I_u$. In such a case, $I_u=\{\inf I_u(=\sup I_u)\}$ and $\inf I_u=\sup I_u \in I_u$ is a finite string. \label{obs:8:singleton:Iu}
\item For $v(\neq u)\in Q$, if $\inf I_u=\inf I_v$ or $\inf I_u=\sup I_v$ then $\inf I_u$  has infinite length; similar for $\sup I_u$. \label{obs:9:same:infsuf:for:u:v:infinite}
\end{enumerate}
\end{observation}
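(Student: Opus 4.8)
The plan is to prove the six items of Observation~\ref{observation:minmax_string} in order, each reducing to elementary properties of the co-lex order and of reachability in a DFA; most parts are short once the preceding ones are available.

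For item~\ref{obs:2}, observe that every $\alpha\in I_u$ satisfies $u=\delta(s,\alpha)$, so $\alpha$ is the label of a finite walk in $\mathcal A$; since $\delta$ is a function on finite words only, $\alpha\in\Sigma^*$. Item~\ref{obs:3:empty:intersection:IuIv} is immediate from determinism: if $\alpha\in I_u\cap I_v$ then $u=\delta(s,\alpha)=v$, contradicting $u\neq v$. For item~\ref{obs:4:sufofIu:exist:betaalphaIu}, I would argue that a finite suffix $\alpha$ of $\inf I_u$ (the case of $\sup I_u$ being symmetric) must occur as a suffix of some string in $I_u$: intuitively, $\inf I_u$ is approached arbitrarily closely from above by members of $I_u$, so for every length $\ell=|\alpha|$ there is some $\delta(s,\beta\alpha')=u$ with $\alpha'$ the length-$\ell$ suffix of $\inf I_u$; I would make this precise by noting that otherwise one could exhibit a lower bound of $I_u$ strictly co-lex-larger than $\inf I_u$ (replace the last $\ell$ characters so as to increase the string while keeping it $\le$ all of $I_u$), contradicting maximality of $\inf I_u$ among lower bounds. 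The main care here is the boundary bookkeeping in the co-lex order (characters are compared from the right), so the "replacement" argument must be set up at the first position, reading from the right, where $\inf I_u$ could be increased.

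Items~\ref{obs:7:finite:Iu}--\ref{obs:9:same:infsuf:for:u:v:infinite} then follow. For item~\ref{obs:7:finite:Iu}: if $\inf I_u$ is finite, then by item~\ref{obs:4:sufofIu:exist:betaalphaIu} applied with $\alpha=\inf I_u$ itself there is $\beta$ with $\beta\cdot\inf I_u\in I_u$, and since $\inf I_u$ is a lower bound of $I_u$ we get $\inf I_u\le\beta\cdot\inf I_u$; but $\inf I_u$ is also $\le$-maximal among lower bounds, so $\beta=\epsilon$ and $\inf I_u\in I_u$. Conversely if $\inf I_u\in I_u$ then it is finite by item~\ref{obs:2}. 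The argument for $\sup I_u$ is symmetric. For item~\ref{obs:8:singleton:Iu}: if $\inf I_u=\sup I_u$ then every $\alpha\in I_u$ satisfies $\inf I_u\le\alpha\le\sup I_u=\inf I_u$, forcing $\alpha=\inf I_u$, so $I_u=\{\inf I_u\}$ is a nonempty singleton (nonempty since every state is reachable), hence $\inf I_u\in I_u$ is finite by item~\ref{obs:7:finite:Iu}; the converse direction is trivial. For item~\ref{obs:9:same:infsuf:for:u:v:infinite}: suppose $\inf I_u=\inf I_v$ (resp.\ $\inf I_u=\sup I_v$) and that this common string is finite; then by item~\ref{obs:7:finite:Iu} it lies in $I_u\cap I_v$, contradicting item~\ref{obs:3:empty:intersection:IuIv}; symmetric for $\sup I_u$.

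The only genuine obstacle is item~\ref{obs:4:sufofIu:exist:betaalphaIu}; the rest are one- or two-line consequences of determinism and of the defining extremal properties of $\inf$ and $\sup$. I expect to spend most of the proof making the replacement/approximation argument for item~\ref{obs:4:sufofIu:exist:betaalphaIu} rigorous, being careful that the infimum string is a left-infinite $\omega$-string (compared right-to-left) and that "$\inf I_u$ is the co-lex largest lower bound" is exactly the hypothesis that lets us conclude a suffix must be realized.
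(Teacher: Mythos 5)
Your treatment of items 1, 2, 5 and 6 coincides with the paper's, and your sketch for item 3 is the same contradiction the paper carries out: if a finite suffix $\alpha$ of $\inf I_u$ cannot be read into $u$, take the longest suffix $\alpha'$ of $\alpha$ that can, let $a$ be the character preceding $\alpha'$ in $\inf I_u$, and bump $a$ to the next larger symbol $b$ to obtain a lower bound $b\alpha'$ of $I_u$ that is co-lex strictly larger than $\inf I_u$. To make this complete you must also rule out the corner case in which no symbol larger than $a$ exists (the paper does this with a short separate contradiction using that every realizable character before $\alpha'$ would then be strictly smaller than $a$, producing a string of $I_u$ below $\inf I_u$), and be aware that the ``symmetric'' $\sup I_u$ case needs its bound adjusted (decreasing the character, and the constructed upper bound is not literally $b\alpha'$), though the paper itself only says ``analogous'' there.

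The genuine gap is in item 4, in the direction ``$\inf I_u$ finite $\Rightarrow \inf I_u\in I_u$''. From item 3 with $\alpha=\inf I_u$ you obtain some $\beta$ with $\beta\cdot\inf I_u\in I_u$, but the inference ``$\inf I_u\le\beta\cdot\inf I_u$ and $\inf I_u$ is maximal among lower bounds, hence $\beta=\epsilon$'' is a non sequitur: $\beta\cdot\inf I_u>\inf I_u$ is perfectly compatible with $\inf I_u$ being the greatest lower bound, so nothing in what you wrote forces $\beta=\epsilon$. Note also that the cases are not symmetric, and you have them the wrong way around: the item-3-based argument is precisely the paper's proof for $\sup I_u$ (there $\beta\neq\epsilon$ gives $\beta\cdot\sup I_u>\sup I_u$, contradicting that $\sup I_u$ is an upper bound of $I_u$), while it gives no contradiction for the infimum. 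For the infimum the paper argues differently: assuming $\inf I_u\notin I_u$, it shows that $a\cdot\inf I_u$, with $a$ the smallest symbol of $\Sigma$, is a lower bound of $I_u$ strictly greater than $\inf I_u$, by a case analysis on the longest common suffix of $\inf I_u$ with each $\beta\in I_u$ (if it is shorter than $|\inf I_u|$, prepending $a$ does not affect the comparison; otherwise $\inf I_u$ is a \emph{proper} suffix of $\beta$, so $a\cdot\inf I_u\le\beta$), contradicting that $\inf I_u$ is the greatest lower bound. Some argument of this global kind, quantifying over all of $I_u$ rather than over the single witness supplied by item 3, is needed to close your proof.
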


\begin{proof}
\begin{enumerate}
\item By definition of $I_u$.
\item By definition of DFA, 
for any string $\alpha$, there exists only one state $u$ such that $\delta(s,\alpha) = u$.

\item
Let $\alpha$ be any finite suffix of $\inf I_u$; the $\sup I_u$ case is analogous. We claim that there must exist $v\in Q$ such that $\delta(v,\alpha)=u$. This will prove our main claim, since for any string $\beta\in I_v$, $\delta(s,\beta\cdot\alpha)=\delta(\delta(s,\beta),\alpha))=\delta(v,\alpha)=u$ by definition, so $\beta\cdot\alpha \in I_u$. 

To prove the claim, assume by contradiction that there is no such $v\in Q$. Let $\alpha'\in\Sigma^*$ be the longest suffix of $\alpha$ such that there exists $v'\in Q$ such that $\delta(v',\alpha')=u$. Let $\alpha''\in\Sigma^*\cup\Sigma^\omega$ and $a\in\Sigma$ be a string and a symbol such that $\inf I_u=\alpha''\cdot a\cdot \alpha'$.
Let $b\in\Sigma$ be the smallest alphabet symbol that is greater than $a$.
Note that such $b$ must exist; if not, every $v'$ must have an incoming transition labeled by a symbol $a'(\in\Sigma)\le a$; since $a'\neq a$ (otherwise $\alpha'$ would be longer), there exists a string in $I_u$ suffixed by $a'\alpha'$ which is co-lex smaller than $\inf I_u$, which causes a contradiction.
Then, for every such $v'$ and every $v''\in Q$ and $c\in\Sigma$ such that $\delta(v'',c)=v'$, we have $a<b\le c$. Note that $\inf I_u<b\cdot\alpha'\le \gamma$ for every $\gamma\in I_u$, so $\inf I_u$ is not the greatest lower bound of $I_u$. This is a contradiction with the definition of $\inf I_u$.

\item $(\Rightarrow)$ By definition of $I_u$.

$(\Leftarrow)$ 
Let $\inf I_u$ be finite. Let us assume, by contradiction, that $\inf I_u\notin I_u$. Let $\alpha=a\cdot \inf I_u$ where $a\in\Sigma$ is the smallest symbol of the alphabet. We claim that $(\inf I_u<)\alpha\le\beta$ for every $\beta\in I_u$, which contradicts the definition of $\inf I_u$. Consider a $\beta\in I_u$. Let $k$ be the length of the longest common suffix of $\beta$ and $\inf I_u$. If $k<|\inf I_u|$, then obviously $\alpha<\beta$ because prepending a symbol to $\inf I_u$ does not affect the relative co-lex order of $\inf I_u$ and $\beta$. If $k=|\inf I_u|$, then $\inf I_u$ is a suffix of $\beta$ and $|\inf I_u|+1\le |\beta|$ because $\inf I_u\notin I_u$. Therefore after prepending the smallest symbol $a$ to $\inf I_u$, we still have $a \cdot \inf I_u \le \beta$.

To prove the other case for $\sup I_u$, let $\sup I_u$ be finite and let us assume for a contradiction that $\sup I_u\notin I_u$. From (\ref{obs:4:sufofIu:exist:betaalphaIu}), there exists $\beta\in\Sigma^*$ such that $\beta\alpha\in I_u$ where $\alpha=\sup I_u$; note that a string is a suffix of itself. Because $\delta(s,\alpha)\neq u$, it holds $\beta\neq\epsilon$. However, then we have $\sup I_u=\alpha<\beta\alpha\in I_u$, which contradicts with the definition of $\sup I_u$ being an upper bound of $I_u$.

\item $(\Rightarrow)$
if $I_u = \{\alpha\}$, then clearly $\inf I_u = \alpha$ and $\sup I_u = \alpha$, so $\inf I_u=\sup I_u$.
$(\Leftarrow)$ If $\inf I_u=\sup I_u$, then they are the same \emph{finite} string. To see this, assume by contradiction that $\inf I_u=\sup I_u$ have infinite length. Then, for every $\alpha\in I_u$, $\inf I_u<\alpha<\sup I_u$. Since $\inf I_u=\sup I_u$, no such $\alpha$ can exist thus $I_u = \emptyset$. This is a contradiction, because  it must hold $I_u\neq\emptyset$ by the assumption that there always exists $\alpha\in\Sigma^*$ such that $\delta(s,\alpha)=u$ (and, for the source $s$, $I_s = \{\epsilon\}$).
Since $\inf I_u(=\sup I_u)$ is a finite string, $\inf I_u\in I_u$ by (\ref{obs:7:finite:Iu}). In addition, $\inf I_u$ is the unique string in $I_u$ because for every $\alpha\in I_u$, $\inf I_u\le \alpha\le\sup I_u$ and $\inf I_u=\sup I_u$, therefore $\inf I_u=\alpha=\sup I_u$.
\item Immediate from Observations (\ref{obs:3:empty:intersection:IuIv}) and (\ref{obs:7:finite:Iu}).
\end{enumerate}
\end{proof}

To conclude the section, we prove a lemma showing that infimum and supremum strings can always be expressed as a (possibly, infinite) concatenation of a constant number of distinct strings whose length does not exceed the number of states.
This lemma will be useful later to bound the sorting depth of our algorithms computing $<_{\mathcal A}$.

\begin{lemma}\label{lem:length inf/sup}
For a DFA $\mathcal{A}=(Q,\Sigma,\delta,s,F)$ and a state $u\in Q$, let $\gamma\in\{\inf I_u,\sup I_u\}$ be either the infimum or the supremum string of $I_u$. Then,
\begin{enumerate}
\item If $\gamma$ is finite, then $|\gamma|<|Q|$.
\item If $\gamma$ has infinite length, then $\gamma=\beta^\omega\alpha$ for some $\alpha,\beta\in \Sigma^*$ such that $|\alpha|+|\beta|<|Q|$.
\end{enumerate}
\label{lemma:infsup:string:form}
\end{lemma}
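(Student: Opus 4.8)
The plan is to analyze the structure of the minimizing sequence of characters appended when reading $\gamma$ from right to left. Recall from Observation~\ref{observation:minmax_string}(\ref{obs:4:sufofIu:exist:betaalphaIu}) that every finite suffix $\alpha$ of $\gamma$ is ``spelled backwards'' by some path ending in $u$: there is a state $w$ with $\delta(w,\alpha)=u$. So associated to each prefix-by-suffix of $\gamma$ there is a well-defined state — more precisely, for a suffix $\alpha$ of $\gamma$ of length $k$, let $w_k$ be the unique state with $\delta(w_k, \alpha) = u$ that is ``the extremal choice'', i.e. the one forced by the infimum/supremum property. I would first make this precise: I claim that reading $\gamma$ right-to-left traces a unique sequence of states $u = w_0, w_1, w_2, \dots$ with $\lambda(w_{i-1}) = \gamma[\,|\gamma|-i+1\,]$ (for finite $\gamma$) or the analogous indexing for the infinite case, where $w_i$ is obtained from $w_{i-1}$ by following the co-lex \emph{smallest} (for $\inf$) or \emph{largest} (for $\sup$) available in-edge. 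The key point is that this ``greedy backward walk'' is deterministic: at each step, among all states $v$ with $\delta(v,c)=w_{i-1}$, input-consistency forces $c=\lambda(w_{i-1})$ to be fixed, and then we pick $v$ minimizing (resp.\ maximizing) $\inf I_v$ (resp.\ $\sup I_v$); the proof that this choice is unique and yields exactly $\gamma$ reuses the contradiction argument already spelled out in the proof of Observation~\ref{observation:minmax_string}(\ref{obs:4:sufofIu:exist:betaalphaIu}) — any other choice would produce a string in $I_u$ strictly below $\inf I_u$ (resp.\ above $\sup I_u$).

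Granting the existence of this deterministic backward walk $w_0, w_1, w_2, \dots$, the two cases follow from a pigeonhole argument on the finitely many states. For the finite case: if $\gamma = \inf I_u \in I_u$ (by Observation~\ref{observation:minmax_string}(\ref{obs:7:finite:Iu})), then $\gamma$ labels a path from $s$ to $u$; I claim we may take the backward walk to end at $s$, and that along this particular path no state repeats. Indeed, if a state repeated, we could excise the cycle between the two occurrences, obtaining a shorter string still in $I_u$; I must check this excision produces a co-lex \emph{smaller} string (for $\inf$) — which it does, since deleting characters from the front of a string and keeping the suffix fixed makes it co-lex smaller (shorter with the same suffix is smaller). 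Hence the path has at most $|Q|$ distinct states, so at most $|Q|-1$ edges, giving $|\gamma| < |Q|$. The $\sup$ case is symmetric: a repeated state would let us pump the cycle once, obtaining a longer string in $I_u$ that is co-lex \emph{larger}, contradicting that $\sup I_u$ is an upper bound. (One has to be slightly careful that pumping genuinely increases the co-lex value; inserting a copy of a cycle into the middle keeps a proper suffix fixed while lengthening, hence strictly increases — wait, lengthening alone does not increase co-lex value, so here I instead argue that the \emph{minimality} of $\sup I_u$ among upper bounds is violated, or equivalently rework it so the repeated-state situation directly contradicts $\gamma \in I_u$ being extremal. I expect this to be the subtle bookkeeping point.)

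For the infinite case, by Observation~\ref{observation:minmax_string}(\ref{obs:7:finite:Iu}) we have $\gamma \notin I_u$ and the backward walk $w_0, w_1, w_2, \dots$ is genuinely infinite. By pigeonhole, two states among $w_0, \dots, w_{|Q|}$ coincide; let $i < j \le |Q|$ be the first such collision, so $w_0, \dots, w_{j-1}$ are distinct and $w_j = w_i$. Because the backward walk is \emph{deterministic} (each $w_{k+1}$ is a function of $w_k$ alone, via the fixed label and the extremal-in-edge rule), from index $i$ onward the walk is periodic with period $j-i$: $w_{i+k(j-i)+r} = w_{i+r}$ for all $k,r$. Translating back to characters, $\gamma$ written right-to-left is $\lambda(w_0)\lambda(w_1)\cdots\lambda(w_{i-1})\,(\lambda(w_i)\cdots\lambda(w_{j-1}))^\omega$, i.e.\ $\gamma = \beta^\omega \alpha$ where $\alpha = \lambda(w_{i-1})\cdots\lambda(w_0)$ has length $i$ and $\beta = \lambda(w_{j-1})\cdots\lambda(w_i)$ has length $j-i$, reading in the correct (left-to-right) direction. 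Then $|\alpha| + |\beta| = i + (j-i) = j \le |Q|$; and since the collision occurs strictly within the first $|Q|+1$ states $w_0,\dots,w_{|Q|}$ we can in fact take $j < |Q|$ — more carefully, among $w_0, \dots, w_{|Q|-1}$ (which is $|Q|$ states) if all were distinct the walk could not be periodic yet it is infinite, forcing a repeat among these, hence $j \le |Q|-1$, giving $|\alpha|+|\beta| < |Q|$ as claimed.

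\textbf{Main obstacle.} The conceptual core — the deterministic backward walk — is essentially already present in the proof of Observation~\ref{observation:minmax_string}(\ref{obs:4:sufofIu:exist:betaalphaIu}); the delicate part is making the ``excise a cycle'' / ``pump a cycle'' arguments in the finite-$\sup$ and the uniqueness steps fully rigorous with respect to co-lex order, since co-lex comparisons are sensitive to suffixes rather than prefixes and naive length changes do not monotonically change the co-lex value. I would handle this by always phrasing the contradiction in terms of the \emph{extremality} of $\gamma$ (greatest lower bound / least upper bound) and the membership facts from Observation~\ref{observation:minmax_string}(\ref{obs:7:finite:Iu}),(\ref{obs:8:singleton:Iu}),(\ref{obs:9:same:infsuf:for:u:v:infinite}), rather than trying to compare the modified string directly with $\gamma$.
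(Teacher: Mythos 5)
Your greedy backward-walk strategy is a genuinely different route from the paper's proof (which takes an \emph{arbitrary} path spelling the length-$|Q|$ suffix of $\gamma$, guaranteed by Observation~\ref{observation:minmax_string}.\ref{obs:4:sufofIu:exist:betaalphaIu}, pigeonholes a repeated state on it, and then derives the two inequalities $\gamma\le\beta'\beta^k\alpha$ and $\beta'\beta^k\alpha\le\gamma$ to force $\gamma=\beta^\omega\alpha$), and it could be made to work; but as written it has concrete gaps, precisely at the points you flag as ``bookkeeping''. In the finite $\inf$ case, your justification for the path being simple is wrong: excising a cycle removes a block from the \emph{middle}, so writing $\gamma=\gamma_1\gamma_2\gamma_3$ (with $\gamma_2$ the cycle label), after cancelling the common suffix $\gamma_3$ you must compare $\gamma_1$ with $\gamma_1\gamma_2$, and the shorter string is \emph{not} in general co-lex smaller --- e.g.\ with $\gamma_1=b$, $\gamma_2=a$ one has $\gamma_1\gamma_2=ba<b=\gamma_1$, so excision can make the string co-lex \emph{larger}. ``Shorter with the same suffix is smaller'' only applies when the deleted block is a prefix (i.e.\ the result is a proper suffix of the original). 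The finite $\sup$ case you explicitly leave open. Both cases do go through, but only via the two-sided argument the paper uses: from $\inf I_u\le\gamma_1\gamma_3\in I_u$ one gets $\gamma_1\gamma_2\le\gamma_1$, hence $\gamma_1\gamma_2^2\gamma_3\le\gamma_1\gamma_2\gamma_3=\inf I_u\le\gamma_1\gamma_2^2\gamma_3$, forcing two finite strings of different lengths to be equal --- a contradiction. That step (or an equivalent one) must be supplied; it is the heart of the lemma, not a detail.

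There is also a gap in the strict bound for the infinite case. Your argument that the first collision index satisfies $j\le|Q|-1$ (``if all were distinct the walk could not be periodic yet it is infinite'') is a non sequitur: an infinite walk may perfectly well have $w_0,\dots,w_{|Q|-1}$ pairwise distinct with its first repeat only at $w_{|Q|}$, which yields merely $|\alpha|+|\beta|\le|Q|$. The missing ingredient is that the initial state $s$ never occurs on the walk when $\gamma$ is infinite (it has no incoming transitions, and $\inf I_s=\epsilon$ is finite while every $\inf I_{w_i}$ is a suffix-truncation of the infinite $\gamma$), so the walk is confined to at most $|Q|-1$ states and the first collision occurs by index $|Q|-1$; this is exactly how the paper obtains strictness ($v_i\ne s$ for $i\ge 2$). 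Two smaller points: the claimed \emph{uniqueness} of the extremal predecessor can fail, since two predecessors may share the same (necessarily infinite) infimum (Observation~\ref{observation:minmax_string}.\ref{obs:9:same:infsuf:for:u:v:infinite}); determinism, which your periodicity argument needs, is easily restored by fixing a tie-breaking rule. Finally, the identity underlying the walk, $\inf I_u=\bigl(\min_{v\in\delta^{-1}(u)}\inf I_v\bigr)\cdot\lambda(u)$ (and its $\sup$/max analogue), is asserted by appeal to Observation~\ref{observation:minmax_string}.\ref{obs:4:sufofIu:exist:betaalphaIu} but not actually proved there; it is true and not hard, but it should be stated and proved explicitly since the whole construction rests on it.
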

\begin{proof} 
Suppose $|\gamma|\ge |Q|$. 
Let $\gamma = \gamma'\gamma''$, where $\gamma''$ is the length-$|Q|$ suffix of $\gamma$.
Consider a sequence of states $v_1,v_2,\cdots,v_{|Q|+1}$ such that 
$v_{|Q|+1}=u$ and $\delta(v_{k},\gamma''[k])=v_{k+1}$ for $1\le k\le |Q|$.
Note that $v_i\neq s$ for every $2\le i\le|Q|+1$ because the start state $s$ does not have an incoming transition. Then, there are at most $|Q|-1$ distinct states among the $|Q|$ states $v_2,\cdots,v_{|Q|+1}$ so by the pigeonhole principle there must be $2\le i<j\le |Q|+1$ such that $v_i=v_j$. 
Let $\beta' = \gamma'  \gamma''[1..i-1]$, $\beta = \gamma''[i..j-1]$, and $\alpha=\gamma''[j..|Q|]$. 
Note that $\beta \alpha$ is a proper suffix of $\gamma''$ (\emph{proper} because $i\geq 2$), therefore $|\alpha| + |\beta| =  |\beta\alpha| < |\gamma''| = |Q|$. Note also that (by definition of $\beta', \beta$, and $\alpha$) $\gamma = \beta'\beta\alpha$.

Let us assume $\gamma=\inf I_u$; the other case $\gamma=\sup I_u$ is analogous.
Note that $\gamma=\beta'\beta\alpha\le \beta'\beta^k\alpha$ for every $k\ge 0$. To see this, observe that if $\gamma$ is a finite string, then $\gamma=\beta'\beta\alpha\in I_u$ by Observation \ref{observation:minmax_string}.\ref{obs:7:finite:Iu}. Since $\delta(s,\beta')=v_i$, $\delta(v_i,\beta)=v_j=v_i$, and $\delta(v_j(=v_i),\alpha)=v_{|Q|+1}=u$, we have $\beta'\beta^k\alpha\in I_u$ for every $k\ge 0$. By definition of $\inf I_u$, $\gamma=\inf I_u\le \beta'\beta^k\alpha$ for every $k\ge 0$.
On the other hand, if $\gamma$ has infinite length, assume for a contradiction that there exists $k'\ge 0$ such that $\beta'\beta^k\alpha<\beta'\beta\alpha=\gamma$. Consider the length-$(l+1)$ suffix $\beta''\alpha$ of $\beta'\beta^k\alpha$ where $l$ is the length of the longest common suffix between $\beta'\beta^k\alpha$ and $\gamma$; clearly, such a $l$ is finite and $l \geq |\alpha|$ since $\alpha$ suffixes both strings. Then by Observation \ref{observation:minmax_string}.\ref{obs:4:sufofIu:exist:betaalphaIu}, there exists $\beta'''\in\Sigma^*$ such that $\beta'''\beta''\alpha\in I_u$. However we have $\beta'''\beta''\alpha<\beta'\beta\alpha=\gamma=\inf I_u$, which contradicts the definition of $\inf I_u$.

By plugging $k=0$ into the inequality $\beta'\beta\alpha \leq \beta'\beta^k\alpha$ above, we obtain $\beta'\beta\alpha \leq \beta'\alpha$. Equivalently (by removing the common suffix $\alpha$) it holds $\beta'\beta \leq \beta'$; but then, we can plug again a common suffix $\beta^k\alpha$ for any $k\geq 0$ and obtain that $\beta'\beta^{k+1}\alpha \leq \beta'\beta^k\alpha$ for any $k\geq 0$. In particular, this implies that $\beta'\beta^k\alpha \leq \beta'\beta\alpha = \gamma$ for any $k\geq 1$.

Since in the previous two paragraphs we proved that $\gamma\le \beta'\beta^k\alpha$ and $\beta'\beta^k\alpha\le\gamma$ for any  $k\ge 1$, we conclude that $\gamma = \beta'\beta^k\alpha$ for any $k\geq 1$, i.e. $\gamma$ must be an $\omega$-string of the form $\gamma = \beta^\omega\alpha$. 
This proves claim (2). Claim (1) also follows since the assumption that $\gamma$ is finite and $|\gamma| \geq |Q|$ leads to $\gamma = \beta^\omega\alpha$ (a contradiction to the finiteness of $\gamma$), hence its negation (i.e. claim 1) must hold.
\end{proof}

\subsection{\texorpdfstring{$O(n)$}{O(n)}-space representation of \texorpdfstring{$<_\mathcal{A}$}{<\_A}}\label{sec:interval representation}

Let $\mathcal K(\mathcal A) = \{\inf I_u : u\in Q\} \cup \{\sup I_u : u\in Q\} \subseteq \Sigma^* \cup \Sigma^\omega$ be the set of all infimum and supremum strings of $\mathcal A$. Let $rank(\alpha)$, for  $\alpha \in \mathcal K(\mathcal A)$, denote the position of $\alpha$ in the total order $(\mathcal K(\mathcal A), <)$ (e.g. $rank(\alpha) = 1$ for the co-lex smallest string $\alpha \in \mathcal K(\mathcal A)$, and so on).

Our new representation of $<_{\mathcal A}$ is the set of $n$ integer pairs $\{(rank(\inf{I_u}), rank(\sup{I_u}))\ :\ u\in Q\} \subseteq [1,2n]\times [1,2n]$ (note that $|\mathcal K(\mathcal A)| \leq 2n$). With the next theorem, we show that this set is indeed sufficient to reconstruct $<_{\mathcal A}$.

\begin{theorem}\label{thm:colex_characterization}
    Let $\mathcal A = (Q, \Sigma, \delta, s, F) $ be a DFA. Then, for any $u,v(\neq u)\in Q$, $u <_{\mathcal A} v$ if and only if $\sup{I_u}\le \inf{I_v}$.
 \end{theorem}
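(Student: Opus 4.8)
The plan is to prove the two directions separately, relying crucially on the description of $<_{\mathcal A}$ in terms of bounds of the languages $I_u$ and on Observation \ref{observation:minmax_string}.

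For the ``only if'' direction, suppose $u <_{\mathcal A} v$; I want to show $\sup I_u \le \inf I_v$. Assume for contradiction that $\inf I_v < \sup I_u$. The idea is to produce a string $\alpha \in I_u$ and a string $\beta \in I_v$ with $\beta \le \alpha$, contradicting Definition \ref{def:prec_DFA}. Since $\inf I_v < \sup I_u$, there is a ``gap'' I can exploit: I will pick a long enough common-suffix threshold so that some finite suffix $\sigma$ of $\sup I_u$ is co-lex strictly above the corresponding prefix-region of $\inf I_v$, while some finite suffix $\tau$ of $\inf I_v$ sits below it. Concretely, using Observation \ref{observation:minmax_string}.\ref{obs:4:sufofIu:exist:betaalphaIu}, any finite suffix of $\sup I_u$ can be extended on the left into a string of $I_u$, and any finite suffix of $\inf I_v$ can be extended on the left into a string of $I_v$. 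Choosing these suffixes long enough that their relative co-lex order already witnesses $\inf I_v < \sup I_u$ (i.e.\ past the longest common suffix of the two bound strings, or — if one is infinite and they share an infinite common suffix — this cannot happen since then they'd be equal, contradicting $\inf I_v<\sup I_u$), the left-extensions land in $I_v$ and $I_u$ respectively with the same strict order preserved, i.e.\ we get $\beta\in I_v$, $\alpha\in I_u$ with $\beta<\alpha$. This contradicts $u<_{\mathcal A}v$. I should also handle the boundary subtlety that $\sup I_u$ or $\inf I_v$ may itself be finite and lie in the respective language (Observation \ref{observation:minmax_string}.\ref{obs:7:finite:Iu}), in which case it can be taken directly as the witness.

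For the ``if'' direction, suppose $\sup I_u \le \inf I_v$; I want $u <_{\mathcal A} v$, i.e.\ $\alpha < \beta$ for all $\alpha \in I_u$, $\beta \in I_v$. For any such $\alpha,\beta$ we have $\alpha \le \sup I_u \le \inf I_v \le \beta$, so $\alpha \le \beta$; it remains to upgrade this to a strict inequality. If $\alpha < \sup I_u$ or $\inf I_v < \beta$ we are done by transitivity. The only remaining case is $\alpha = \sup I_u = \inf I_v = \beta$. Here I invoke Observation \ref{observation:minmax_string}.\ref{obs:3:empty:intersection:IuIv}: since $u \neq v$, $I_u \cap I_v = \emptyset$, so $\alpha \in I_u$ and $\beta \in I_v$ cannot be equal. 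This contradiction eliminates the last case, establishing $\alpha < \beta$ for all choices, hence $u <_{\mathcal A} v$.

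I expect the ``only if'' direction to be the main obstacle: one must carefully convert the strict inequality $\inf I_v < \sup I_u$ between (possibly infinite) bound strings into a concrete pair of finite strings in $I_v$ and $I_u$ that still witnesses the wrong order. The delicate points are (i) ensuring the chosen finite suffixes are long enough to capture the decisive co-lex comparison — which is fine because that comparison is decided at some finite position past the longest common suffix, and by Lemma \ref{lem:length inf/sup} the bound strings are eventually periodic so this position is well-controlled — and (ii) the case analysis when one of the two bound strings is infinite versus finite, using Observation \ref{observation:minmax_string}.\ref{obs:4:sufofIu:exist:betaalphaIu} and \ref{observation:minmax_string}.\ref{obs:7:finite:Iu} to realize the appropriate left-extensions inside $I_u$ and $I_v$. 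Once those witnesses are in hand, the contradiction with Definition \ref{def:prec_DFA} is immediate.
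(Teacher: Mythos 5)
Your proposal is correct and follows essentially the same route as the paper: for the forward direction you derive a contradiction from $\inf I_v < \sup I_u$ by left-extending sufficiently long finite suffixes of the bound strings into concrete witnesses in $I_v$ and $I_u$ (via Observation \ref{observation:minmax_string}.\ref{obs:4:sufofIu:exist:betaalphaIu}, with Observation \ref{observation:minmax_string}.\ref{obs:7:finite:Iu} handling the finite-bound cases), and for the converse you use the sandwich $\alpha \le \sup I_u \le \inf I_v \le \beta$ and rule out equality. The only cosmetic differences are that the paper obtains the witness $\alpha\in I_u$ directly from the definition of the supremum rather than by extending a suffix of $\sup I_u$, and settles the equality case in the converse via Observation \ref{observation:minmax_string}.\ref{obs:9:same:infsuf:for:u:v:infinite} instead of invoking disjointness of $I_u$ and $I_v$ directly.
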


\begin{proof}

$(\Rightarrow)$
To prove $u<_\mathcal{A} v \Rightarrow \sup I_u \leq \inf I_v$ for all $u,v\in Q$, assume by contradiction that there exist $u,v\in Q$ such that $u<_\mathcal{A} v$ and $\inf I_v < \sup I_u$.
We claim that, in this case, there must exist $\alpha\in I_u$, $\beta\in I_v$ such that $\beta<\alpha$. By Definition \ref{def:prec_DFA}, this contradicts $u<_\mathcal{A} v$.
First, note that there must exist $\alpha\in I_u$ such that $\inf I_v<\alpha$, otherwise it would be $\sup I_u\le \inf I_v$. 
We divide the proof by contradiction in the two  cases (i) $\inf I_v$ is a finite string and (ii) $\inf I_v$ has infinite length. 

(i) If $\inf I_v$ is finite, then $\inf I_v\in I_v$ by Observation \ref{observation:minmax_string}.\ref{obs:7:finite:Iu}. Choosing $\beta=\inf I_v$, we have  $\beta=\inf I_v (\in I_v) <\alpha$. This contradicts $u<_\mathcal{A} v$.

(ii) If $\inf I_v$ has infinite length, then by Lemma \ref{lemma:infsup:string:form} we can write it as $\inf I_v=\gamma_2^\omega \gamma_1$ for some strings $\gamma_1,\gamma_2\in \Sigma^*$. Note that, for every $k\ge 0$, there exists a string $\gamma_3\in\Sigma^*$ such that $\gamma_3\gamma_2^k\gamma_1\in I_v$ (by Observation \ref{observation:minmax_string}.\ref{obs:4:sufofIu:exist:betaalphaIu} because $\gamma_2^k\gamma_1$ is a suffix of $\inf I_u$).
Choose any integer $k'$  such that $|\gamma_2^{k'}\gamma_1|>|\alpha|$ (such an integer exists since $\alpha$ is finite). 
Since $\inf I_v=\gamma_2^\omega\gamma_1<\alpha$, we also have $\gamma_3\gamma_2^{k'}\gamma_1(=\beta\in I_v) <\alpha$. Again, this contradicts $u<_\mathcal{A} v$.

$(\Leftarrow)$ Let $\sup{I_u}\le \inf{I_v}$, and choose any $\alpha\in I_u$ and $\beta\in I_v$. We need to prove that $\alpha <_{\mathcal A} \beta$.
By definition of $\sup{I_u}$ and $\inf{I_v}$, we have $\alpha \le \sup I_u \le \inf I_v \le \beta$.
If $\sup I_u<\inf I_v$, then $\alpha<\beta$.
If, on the other hand, $\sup I_u=\inf I_v$ then both $\sup I_u$ and $\inf I_v$ must be infinite strings by Observation \ref{observation:minmax_string}.\ref{obs:9:same:infsuf:for:u:v:infinite}. Since $\alpha$ and $\beta$ are both finite, it must be the case that  $\alpha\neq \sup I_u$ and $\beta \neq \inf I_v$, therefore $\alpha<\sup I_u=\inf I_v<\beta$. Since this holds for any $\alpha\in I_u$ and $\beta\in I_v$, by definition of $<_{\mathcal A}$ it holds $\alpha <_{\mathcal A} \beta$.
\end{proof}

Equivalently, Theorem \ref{thm:colex_characterization} shows that $<_{\mathcal A}$ can be interpreted as a set of intervals on the co-lex sorted $Pref(\mathcal L(\mathcal A))$.
This characterization of $<_{\mathcal A}$  will allow us to compute this order faster than the state-of-the-art by (i) co-lex sorting the infimum and supremum strings (Section \ref{sec:sorting}), and (ii) computing a smallest chain partition for $<_{\mathcal A}$ in linear time (Section \ref{sec:chain partition}).

\subsection{Linear-time chain partitioning algorithm}\label{sec:chain partition}

In general, a partial order over $n$ elements requires $O(n^2)$ space to be represented. Moreover, the fastest general-purpose algorithms for computing the smallest chain partition of a partial order run either in worst-case time $O(n^{5/2})$ (see, for example, \cite[Lem. 6.1]{cotumaccio:soda21}) or in $\tilde O(n^2)$ time w.h.p. \cite{KoganParterICALP}. In this section we show that given the $O(n)$-space  representation $S = \{(rank(\inf{I_u}), rank(\sup{I_u})):u\in Q\}$ of $<_{\mathcal A}$, from which the order can be represented using intervals, we can compute a smallest chain partition of this order in optimal $O(n)$ time. It is known that the optimal solution of a smallest chain partition of interval orders can be computed with a greedy method (see \cite[Sec. 6.8]{kellertrotter17}). Moreover, given the sorted intervals, one can compute it in linear time \cite{intervalcoloring}; for completeness here we give the details.

Based on Theorem \ref{thm:colex_characterization}, we now show a simple linear-time reduction from the smallest chain partition problem (where the input order is represented as described in Section \ref{sec:interval representation}) to the following problem:

\begin{definition}[\textit{Interval partitioning problem, cf. \cite [Sec.~4.1]{kleinberg:ad06}}]\label{def:interval-partitioning}
    Let $\{[s_1, f_1], \dots, [s_n, f_n]\}$ be a set of $n$ activities that must be served (each) by a device.
    One device can handle at most one activity at the same time. 
    $[s_i, f_i]$ is an interval, where $s_i$ and $f_i$ are the \textit{starting} and \textit{finishing} time of activity $i$, respectively. Determine the minimum number of devices to serve all the activities.
\end{definition}

Let $S = \{a_1 = (s_1, f_1), \dots, a_n = (s_n, f_n)\}$ be an instance of the \textit{smallest chain partition} problem for $<_{\mathcal A}$ (that is, a particular instance of $<_{\mathcal A}$). Our reduction from this instance to an instance of the \textit{interval partitioning problem} works as follows:

\begin{enumerate}
    \item For each pair $a_i = (s_i, f_i)$, with $i \in [1..n]$, let $s_i' = 2s_i + 1$ and $f_i' = 2f_i$.
    \item Return the set of intervals $S'' = \{a''_i\}_{i=1}^n$, where $a_i'' = [s_i'' = \min{(s_i', f_i'), f_i'' = \max{(s_i', f_i')}}]$
\end{enumerate}

The following Lemma shows that our reduction is correct:

\begin{lemma}\label{lemma:reduction}
Let $(s_i,f_i), (s_j,f_j)$ be two input pairs, with $s_i\leq s_j$ without loss of generality. Let moreover $[s''_i,f''_i], [s''_j,f''_j]$ be the intervals into which the two pairs get transformed by the above reduction. Then, $f_i \leq s_j$ if and only if $f_i'' < s_j''$ (i.e. $[s''_i,f''_i]$ and $[s''_j,f''_j]$ do not overlap).
\end{lemma}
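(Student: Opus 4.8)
The plan is to unfold the reduction definitions and reduce everything to a small integer parity argument. First I would recall what the reduction does: given a pair $(s_i,f_i)$ coming from $<_{\mathcal A}$, we know $s_i = rank(\inf I_u) \le rank(\sup I_u) = f_i$, so $s_i \le f_i$ always holds; the reduction sets $s_i' = 2s_i+1$ (odd) and $f_i' = 2f_i$ (even). Since $s_i \le f_i$ we have $2s_i+1 \le 2f_i+1$, but we need to compare $2s_i+1$ with $2f_i$: if $s_i < f_i$ then $2s_i+1 \le 2f_i-1 < 2f_i$, so $s_i' < f_i'$ and hence $s_i'' = s_i' = 2s_i+1$, $f_i'' = f_i' = 2f_i$; if $s_i = f_i$ then $s_i' = 2s_i+1 > 2s_i = f_i'$, so the $\min/\max$ swaps them and $s_i'' = f_i' = 2f_i$, $f_i'' = s_i' = 2s_i+1 = 2f_i+1$. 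I would record these two cases cleanly, perhaps noting that in all cases $s_i'' \in \{2s_i, 2s_i+1\}$ and $f_i'' \in \{2f_i, 2f_i+1\}$ with $s_i'' \le f_i''$, and in fact $s_i'' \ge 2s_i$ and $f_i'' \le 2f_i+1$ while $s_i'' \le 2s_i+1$ and $f_i'' \ge 2f_i$.

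Next I would prove the two directions of the biconditional. For the forward direction, assume $f_i \le s_j$. Then $f_i'' \le 2f_i+1 \le 2s_j+1$. I need $f_i'' < s_j''$, i.e. $f_i'' < s_j''$ where $s_j'' \ge 2s_j$. The danger case is $f_i = s_j$: then I must show $f_i'' < s_j''$ strictly. Here the parity trick does the work: $s_j'' $ is odd exactly when $s_j < f_j$ (it equals $2s_j+1$) or... wait, let me be careful — when $s_j = f_j$, $s_j'' = 2f_j = 2s_j$ which is even; when $s_j < f_j$, $s_j'' = 2s_j+1$ which is odd. Meanwhile $f_i''$ is $2f_i$ (even) when $s_i<f_i$ and $2f_i+1$ (odd) when $s_i = f_i$. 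With $f_i = s_j$, the pair $(f_i'', s_j'')$ is one of $(2f_i, 2s_j)=(2f_i,2f_i)$, $(2f_i, 2s_j+1)$, $(2f_i+1, 2s_j)$, $(2f_i+1,2s_j+1)$; all but the unhelpful-looking ones are automatically strict, and the equal-looking case $(2f_i,2f_i)$ corresponds to $s_i<f_i$ AND $s_j=f_j$. But here is the subtlety I must resolve with $s_i \le s_j$ and $f_i = s_j$: this forces $s_i \le s_j = f_i$, and combined with $s_i < f_i$ we get $s_i < f_i = s_j$, and $s_j = f_j$ means $\inf I_w = \sup I_w$ for the state $w$ with pair $(s_j,f_j)$, so $I_w$ is a singleton. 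I would then need Theorem~\ref{thm:colex_characterization} / Observation~\ref{observation:minmax_string} to rule out that two distinct states have $rank(\sup I_u) = rank(\inf I_w)$ when $I_w$ is a singleton: indeed if $f_i = s_j$ as ranks then $\sup I_u = \inf I_w$ as strings, but by Observation~\ref{observation:minmax_string}.\ref{obs:9:same:infsuf:for:u:v:infinite} this forces $\sup I_u$ to be infinite, contradicting $\inf I_w = \sup I_w$ being finite (Observation~\ref{observation:minmax_string}.\ref{obs:8:singleton:Iu}). So the problematic sub-case cannot arise, and in every remaining case $f_i'' < s_j''$.

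For the reverse direction, assume $f_i'' < s_j''$ and suppose for contradiction $f_i > s_j$, i.e. $f_i \ge s_j+1$. Then $f_i'' \ge 2f_i \ge 2s_j+2 > 2s_j+1 \ge s_j''$, contradicting $f_i'' < s_j''$. Hence $f_i \le s_j$. This direction is the easy one. I would also want to explicitly state (or it may already be implicit in the surrounding text) that "do not overlap" for the closed intervals $[s_i'',f_i''], [s_j'',f_j'']$ with $s_i'' \le s_j''$ — which follows since $s_i \le s_j$ implies $s_i'' \le s_j''$ in all parity cases — is exactly the condition $f_i'' < s_j''$, so the parenthetical in the statement is justified.

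The main obstacle is the equality sub-case $f_i = s_j$ in the forward direction, where naive bounds give $f_i'' \le s_j''$ but not strict inequality; resolving it requires observing that the only parity combination yielding $f_i'' = s_j''$ is incompatible with $u \ne w$ by the infimum/supremum coincidence lemmas (Observation~\ref{observation:minmax_string}.\ref{obs:8:singleton:Iu} and \ref{obs:9:same:infsuf:for:u:v:infinite}), together with the fact $s_i \le s_j$ which pins down which of $s_i = f_i$ / $s_i < f_i$ must hold. Everything else is a routine case check on parities of $2s_i+1$ versus $2f_i$.
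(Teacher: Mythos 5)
Your overall route is the same as the paper's: split on whether the pairs are degenerate ($s_i=f_i$ or $s_j=f_j$), use the structural observations on infimum/supremum strings to kill the boundary case $f_i=s_j$, and finish with the doubling arithmetic; the only cosmetic difference is that you invoke Observation \ref{observation:minmax_string}.\ref{obs:9:same:infsuf:for:u:v:infinite} where the paper argues directly from Observations \ref{observation:minmax_string}.\ref{obs:3:empty:intersection:IuIv} and \ref{observation:minmax_string}.\ref{obs:7:finite:Iu}. However, your treatment of the boundary case has a genuine hole. When $f_i=s_j$, the combinations violating the desired strict inequality are not only $(f_i'',s_j'')=(2f_i,2f_i)$ (your case $s_i<f_i$, $s_j=f_j$) but also $(2f_i+1,2f_i+1)$ (when $s_i=f_i$ and $s_j<f_j$) and $(2f_i+1,2f_i)$ (when $s_i=f_i=s_j=f_j$), where you even get $f_i''>s_j''$. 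So your assertion that every remaining case satisfies $f_i''<s_j''$, and your closing claim that $(2f_i,2f_i)$ is the only parity combination yielding $f_i''=s_j''$, are false as stated; moreover the assumption $s_i\le s_j$ does not pin down $s_i<f_i$ as you suggest, since $s_i=f_i=s_j$ is perfectly consistent with $s_i\le s_j$ (distinct states may a priori share a rank, because ranks are positions of strings in $\mathcal K(\mathcal A)$, not of states).

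The gap is one of incomplete case enumeration rather than a missing idea, because the unhandled cases die by the very argument you already use, applied to the other pair: there $I_u$ (not $I_w$) is the singleton, so $\sup I_u=\inf I_u$ is finite by Observation \ref{observation:minmax_string}.\ref{obs:8:singleton:Iu}, while $f_i=s_j$ would force $\sup I_u=\inf I_w$ for a distinct state $w$ and hence, by Observation \ref{observation:minmax_string}.\ref{obs:9:same:infsuf:for:u:v:infinite}, infinite length of $\sup I_u$ --- a contradiction. The paper sidesteps this bookkeeping by first proving, whenever at least one of $s_i=f_i$ or $s_j=f_j$ holds, that $f_i\neq s_j$ (treating the two degenerate possibilities symmetrically), and only then converting $f_i\le s_j$ into $f_i+1\le s_j$ before doing the arithmetic; restructure your degenerate case the same way, or apply your singleton argument to both pairs. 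Your reverse direction and the non-degenerate case ($s_i<f_i$ and $s_j<f_j$) are correct and coincide with the paper's Case~2.
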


\begin{proof}
We divide the proof into two cases: (Case 1) at least one of $s_i=f_i$ or $s_j=f_j$ holds, and (Case 2) both $s_i<f_i$ and $s_j<f_j$ hold.

(Case 1) 
First, we show that $f_i\neq s_j$. Assume that $s_i=f_i$ (the other case $s_j=f_j$ is analogous). 
Let $u$ be the state associated with the pair $(s_i,f_i)$, and $v$ be the state associated with the pair $(s_j,f_j)$.
By Observation \ref{observation:minmax_string}.\ref{obs:8:singleton:Iu}, $s_i=f_i$ implies that $\inf I_u = \sup I_u$ is a finite string, and $\inf I_u = \sup I_u \in I_u$. 
If $\inf I_v$ is an infinite string, then clearly $\sup I_u \neq \inf I_v$ (being $\sup I_u$ a finite string), i.e. $f_i\neq s_j$. If, on the other hand, $\inf I_v$ is a finite string, then by Observation
\ref{observation:minmax_string}.\ref{obs:7:finite:Iu} we have $\inf I_v \in I_v$; since by Observation \ref{observation:minmax_string}.\ref{obs:3:empty:intersection:IuIv}, we have $I_u \cap I_v = \emptyset$, also in this case we derive that $\sup I_u \neq \inf I_v$, i.e. $f_i\neq s_j$.

Knowing $f_i\neq s_j$, we obtain that $f_i\le s_j\Leftrightarrow f_i<s_j \Leftrightarrow f_i+1\le s_j$.
Note that, since $s_j''=2s_j+1>2s_j$ (if $s_j\neq f_j$) or $s_j''=2f_j=2s_j$ (if $s_j=f_j$), we have 
$2s_j\le s_j''$. Similarly, $f_i''\le 2f_i +1$. Hence $2s_i \le s_i''<f_i''\le 2f_i+1$ (note that $s_i''<f_i''$ always holds for any interval in our reduction).
Therefore, we have $f_i \le s_j \Rightarrow f_i+1\le s_j\Rightarrow f_i'' \le 2f_i+1 < 2(f_i+1) \le 2s_j \le s_j''$. 
For the other direction, note that 
$2f_i\le f_i''$ and $s_j''\le 2s_j+1$.
Then, using these inequalities we obtain: $f_i''<s_j'' \Rightarrow 2f_i\le f_i''<s_j''\le 2s_j+1\Rightarrow 2f_i<2s_j+1 \Rightarrow f_i\le s_j$. 

(Case 2) In this case, we have $f_i\le s_j\Rightarrow f_i''=2 f_i < 2 s_j+1 = s''_j \Rightarrow f_i'' < s''_j$. For the other direction, note that 
$f_i'' < s''_j \Rightarrow 2 f_i = f_i'' < s''_j = 2 s_j+1 \Rightarrow f_i \le s_j$.
\end{proof}

By Lemma \ref{lemma:reduction}, we can now solve \textit{smallest chain partition problem} for the particular order $<_{\mathcal A}$ by reducing it to an instance of the \textit{interval partitioning problem}. Moreover, it is easy to see that the reduction works in linear time so the linearity of our strategy relies on the cost of the algorithm we use to solve the latter problem. We can use a greedy method (cf. \cite{intervalcoloring,intervalcoloring2}) to optimally solve the interval partitioning problem (namely, using the smallest possible number of devices).
The algorithm processes the intervals in non-decreasing order of starting times, breaking ties arbitrarily.
For each interval, we choose any idle device among the available ones. We can keep track of the list of the available devices if the starting and finishing times of the intervals are already sorted. If all devices are busy, we add a new device.

The above-sketched algorithm spends amortized constant time on every activity, plus the time required to sort the input set of intervals. As said earlier, the elements of our input pairs (i.e. before the reduction) are integer values in the range $[1,2n]$.
After the reduction, this range gets expanded to $[2,4n+1]$. This allows us to radix-sort the intervals in $O(n)$ time. As a result, in our scenario we can solve the \textit{interval partition problem} in $O(n)$ time and, in particular, find the \textit{smallest chain partition} of $<_{\mathcal A}$ given its ranked-pair representation in linear time. 

%%%%% Section 4
\section{Co-lex sorting infimum/supremum strings}\label{sec:sorting}

In this section, we present three algorithms to compute and sort the set containing all infimum and supremum strings of a DFA.
The first two algorithms sort the strings in such a way that for every iteration the strings are co-lex sorted with respect to a longer suffix; we present one simple solution that increases the suffix length by 1 at each iteration, and one that doubles the suffix length at each iteration. The third algorithm is a generalization of online BWT construction and is based on the online algorithm for sorting Wheeler DFA presented in \cite[Sec. 3.2]{alanko:soda20}. This algorithm works only on acyclic DFAs but has a lower time complexity than the former two solutions.

For ease of explanation, we consider only infimum strings since the supremum string case is analogous. 
Indeed, one can easily compute and sort both infimum and supremum strings at the same time by creating two copies of the input DFA and then running our algorithms on the union of the two DFAs, extracting the infimum strings on one DFA and the supremum strings on the other DFA while at the same time sorting the union of these two string sets. 

\subsection{Simple \texorpdfstring{$O(mn)$}{O(mn)}-time algorithm}

Let us establish some notations before describing the algorithm.
For a (possibly infinite) string $\alpha$ and an integer $k\ge 0$, we denote by $suf_k(\alpha)$ the length-$k$ suffix of $\alpha$.
When $|\alpha|<k$, we pad $suf_k(\alpha)$ by prepending $k-|\alpha|$ copies of a special symbol $\# \notin \Sigma$, with $\# < c$ for all $c\in \Sigma$; in this way, we guarantee that $suf_k(\alpha)$ is always a string of length $k$ and we do not affect the co-lex order of such suffixes (which remains the same before and after the padding).

For state $u\in Q$ and integer $k\ge 0$, we denote by $rank_k(u) \in \mathbb N$ the intermediate rank at iteration $k$ of $u$ in the total order we are computing; this integer indicates the co-lex rank of $suf_k(\inf I_u)$ among $\{ suf_k(\inf I_v) : v\in Q\}$. More formally, for $u\in Q$ and $k\ge 0$,
\[
\begin{array}{l}
    rank_0(u) = 1 \\
    rank_k(u) = |\{ suf_k(\inf I_v): v\in Q~\land ~suf_k(\inf I_v)\le suf_k(\inf I_u) \}| \ \ \mathrm{for}\ k>0 
\end{array}
\]
Observe that two states are assigned the same rank if their corresponding length-$k$ suffixes are equal. 
The algorithm works by \emph{pruning} transitions of the input automaton, i.e. by removing, for every state $u$, transitions coming from a state with a non-minimum rank among the predecessors of $u$. We denote by $\delta_k$ the (pruned) transition function at iteration $k$.

The algorithm works as follows. 
At iteration $k\ge 0$, we perform the following operations:

\begin{enumerate}
\item \textbf{Compute $rank_{k+1}$.} Sort the states $\{u\in Q\}$ by their label $\lambda(u)$ with ties broken by $rank_k(v)$ for any $v\in \delta_k^{-1}(u)$ (the step below will guarantee that all predecessors $v$ of $u$ have the same $rank_k(v)$).
\item \textbf{Compute $\delta_{k+1}$.} For each $u\in Q$,  keep only the transitions from the min-rank predecessors: for $v\in\delta_{k}^{-1}(u)$, $v\in\delta_{k+1}^{-1}(u)$ iff $rank_{k+1}(v)=\min \{rank_{k+1}(u'):u'\in \delta_k^{-1}(u)\}$.
\end{enumerate}

As far as the running time of each iteration is concerned, computing $rank_{k+1}$ can be done in $O(n)$ time by 2-pass radix sorting (that is, by incoming label and breaking ties by any predecessor's rank $rank_{k}$). Computing $\delta_{k+1}$ takes $O(|\delta_k|)=O(|\delta|) = O(m)$ time. Hence, each iteration takes $O(m)$ time.

Since $\forall k\ge 0$, $\forall u\in Q$, and $\forall v\in \delta_k^{-1}(u)$ we have $suf_{k+1}(\inf I_u)=suf_{k}(\inf I_v)\cdot \lambda(u)$, it is easy to see that the following invariant always holds at the beginning of iteration $k$: the infimum strings are sorted with respect to the co-lex order of their length-$k$ suffixes.
This invariant shows that the number of iterations we have to perform is exactly the length of the suffixes that need to be sorted to obtain the correct co-lex order of the infimum strings.
We are left to find an upper bound to this length; observe that this is not a trivial problem, since infimum strings may have infinite length. 

Consider any two infimum strings $\alpha,\beta\in\{\inf I_u:u\in Q\}$. The upper bound above can be computed by upper-bounding the length of the longest common suffix between $\alpha$ and $\beta$.
If any of the two strings is finite, then by Lemma \ref{lemma:infsup:string:form} their longest common suffix does not exceed length $n$.
If both strings are infinite, then
by Lemma \ref{lemma:infsup:string:form} we can write them as $\alpha=\alpha_2^\omega\alpha_1$ and $\beta=\beta_2^\omega\beta_1$ and we can use the following:

\begin{lemma}[cf.~\cite{mantaci2007extension}]
For two infinite strings $\alpha=\alpha_2^\omega\alpha_1$ and $\beta=\beta_2^\omega\beta_1$, where $\alpha_1,\beta_1\in\Sigma^*$ and $\alpha_2,\beta_2\in \Sigma^+$, let $\alpha'$ and $\beta'$ be their suffixes of length $k=|\alpha_2|+|\beta_2|+\max\{|\alpha_1|,|\beta_1|\}$. Then, $\alpha' < \beta'$ if and only if $\alpha < \beta$.
\label{lemma:suffix:length}
\end{lemma}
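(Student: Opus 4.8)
The statement compares two eventually-periodic left-infinite strings $\alpha = \alpha_2^\omega\alpha_1$ and $\beta = \beta_2^\omega\beta_1$ (reading, as always in this paper, from the right). The claim is that it suffices to look at a suffix of length $k = |\alpha_2| + |\beta_2| + \max\{|\alpha_1|,|\beta_1|\}$ to decide their co-lex order. The plan is to show that the co-lex comparison ``stabilizes'' once we have read past the non-periodic tails $\alpha_1$ and $\beta_1$ together with a full joint period $|\alpha_2|+|\beta_2|$ of the periodic parts. Concretely, I would first record the elementary fact that for left-infinite strings, $\gamma < \delta$ iff either they are equal or, letting $i$ be the smallest position (counting from the right, $i \ge 1$) at which they differ, the $i$-th symbol from the right of $\gamma$ is smaller than that of $\delta$; and that $\gamma < \delta$ \emph{cannot} be certified by a suffix of length $<i$ but \emph{is} certified by any suffix of length $\ge i$. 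So the whole lemma reduces to proving: \emph{if $\alpha \ne \beta$ then their first mismatch position from the right is at most $k$; and if $\alpha = \beta$ then their length-$k$ suffixes are equal.}

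For the second part ($\alpha = \beta$), this is immediate: equal strings have equal suffixes of every length. For the first part, suppose for contradiction that $\alpha \ne \beta$ but $\alpha$ and $\beta$ agree on their entire length-$k$ suffix $\alpha' = \beta'$. Write $p = |\alpha_2|$, $q = |\beta_2|$, $r = \max\{|\alpha_1|,|\beta_1|\}$, so $k = p + q + r$. The idea is a standard periodicity argument: on the window of length $k$, both strings are, beyond their initial-tail contributions, purely periodic — $\alpha$ is periodic with period $p$ and $\beta$ with period $q$ on the portion of the window lying to the left of position $r$ from the right. Since $\alpha' = \beta'$ on this window, the common suffix $\alpha'[1..k-r]$ beyond the tails (of length $k-r = p+q$) is simultaneously $p$-periodic and $q$-periodic, and it has length $p+q$. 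By the Fine–Wilf theorem (periodicity lemma), a word that has periods $p$ and $q$ and length at least $p+q$ (in fact $p+q-\gcd(p,q)$ suffices) also has period $\gcd(p,q)$. From here I would propagate the periodicity to the infinite tails: $\alpha$ is globally $p$-periodic to the left of its tail of length $|\alpha_1|$, $\beta$ is globally $q$-periodic to the left of its tail of length $|\beta_1|$, the window reaches at least $r \ge |\alpha_1|,|\beta_1|$ symbols to the right, and the window's left portion is $\gcd(p,q)$-periodic and matches between the two strings — so both infinite left-tails coincide, and since the right portions of length $r$ also coincide, $\alpha = \beta$, contradicting $\alpha \ne \beta$.

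An alternative to invoking Fine–Wilf, which I would actually prefer for a self-contained write-up, is a direct ``alignment + pigeonhole'' argument mirroring the proof of Lemma \ref{lemma:infsup:string:form}: walk leftward through the agreed window one symbol at a time while tracking the pair of offsets $(i \bmod p, \, j \bmod q)$ into the two periodic parts; there are only $p\cdot q$ such pairs, but once we are past the tails we have $p+q$ symbols of agreement, and a short counting argument shows that within $p+q$ steps the pair of offsets must realign in a way forcing the periods to be compatible, again yielding $\alpha=\beta$. Either route works; the key quantitative input is that $p+q$ (plus the slack $r$ to clear the tails) is exactly enough.

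The main obstacle is the bookkeeping at the boundary between the aperiodic tail and the periodic part: $\alpha_1$ and $\beta_1$ may have \emph{different} lengths, so ``clearing the tails'' must be done with $r = \max\{|\alpha_1|,|\beta_1|\}$ on \emph{both} strings, and one has to be careful that on the length-$(p+q)$ sub-window to the left of position $r$ both strings really are in their purely periodic regime — this is where the choice $k = p+q+r$ rather than something smaller matters. Everything else is a routine instance of the periodicity lemma; since the paper already cites \cite{mantaci2007extension} for this statement, it would also be legitimate to simply appeal to that reference and give only the one-paragraph reduction to the first-mismatch-position formulation above.
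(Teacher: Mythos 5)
Your main route is correct, and it is organized differently from the paper's proof. You reduce the lemma to the statement ``if $\alpha\neq\beta$ then the first mismatch (from the right) occurs within distance $k=p+q+r$'', and prove it by contradiction: if the length-$k$ suffixes agree, the common sub-window of length $p+q$ lying past position $r$ is simultaneously $p$- and $q$-periodic, so by Fine--Wilf it is $\gcd(p,q)$-periodic, and this periodicity propagates leftward (using the $p$-, resp. $q$-, periodicity of $\alpha$, resp. $\beta$, beyond their tails) to force $\alpha=\beta$. The paper instead proceeds by a suffix-peeling case analysis -- if $\alpha_1$ does not suffix $\beta_1$, or $\beta_1'$ does not suffix $\alpha_2$, the mismatch is already early; otherwise it uses the rotation $\alpha_2^\omega=(\beta_1'\alpha_2')^\omega\beta_1'$ to strip the common tail and reduce to comparing two \emph{purely} periodic $\omega$-strings, at which point it cites Proposition~5 of \cite{mantaci2007extension} (whose content is exactly the Fine--Wilf bound $|\gamma_1|+|\gamma_2|-\gcd$). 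So the two arguments rest on the same periodicity fact; yours applies it directly to the agreed window and avoids both the case analysis and the rotation/normalization step (the paper's rewriting $\alpha_1\leftarrow\alpha_2^k\alpha_1$), at the price of having to spell out the propagation step (that $\gcd$-periodicity of the length-$(p+q)$ window, together with $p$- and $q$-periodicity beyond position $r$, determines and equates both strings on all positions $>r$) -- that step is routine but should be written out. One caution: your ``alternative'' pigeonhole-on-offset-pairs sketch does not deliver the $p+q$ bound as stated, since the pair $(i\bmod p,\ i\bmod q)$ only recurs after $\mathrm{lcm}(p,q)$ steps; getting $p+q$ out of an elementary counting argument essentially amounts to reproving Fine--Wilf, so you should either invoke Fine--Wilf (as in your main route) or cite \cite{mantaci2007extension} as the paper does.
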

\begin{proof}

Without loss of generality, let us assume $|\alpha_1|\le |\beta_1|$. 
Moreover, note that without loss of generality we can also assume that $|\alpha_2|+|\alpha_1| > |\beta_1|$; if this does not hold, then re-write $\alpha_1 \leftarrow \alpha_2^k\alpha_1$ for the only integer $k>0$ such that $|\alpha_1|\le |\beta_1| < |\alpha_2|+|\alpha_1|$ holds; after the transformation, $\alpha$ can still be written as $\alpha=\alpha_2^\omega\alpha_1$.

If $\alpha_1$ is not a suffix of $\beta_1$, then clearly the longest common suffix between $\alpha$ and $\beta$ is at most $|\alpha_1|$, so the claim holds. Let us assume therefore that $\alpha_1$ is a suffix of $\beta_1$, i.e. $\beta_1 = \beta_1'\alpha_1$ for some $\beta_1'\in \Sigma^*$. 
Since by assumption $|\alpha_2|+|\alpha_1| > |\beta_1|$, note that $|\alpha_2| > |\beta_1'|$.
Similarly as above, if $\beta_1'$ does not suffix $\alpha_2$ (i.e. $\beta_1 = \beta_1'\alpha_1$ does not suffix $\alpha = \alpha_2^\omega\alpha_1$) then the longest common suffix between $\alpha$ and $\beta$ is at most $|\beta_1|$ and the claim holds, so let us assume that $\beta_1'$ is a suffix of $\alpha_2$, i.e. $\alpha_2 = \alpha_2'\beta_1'$ for some $\alpha_2'\in \Sigma^*$. Since $\alpha_2^\omega = (\alpha_2'\beta_1')^\omega = (\beta_1'\alpha_2')^\omega\beta_1'$, we conclude that comparing co-lexicographically $\alpha = (\beta_1'\alpha_2')^\omega\beta_1'\alpha_1$ and $\beta = \beta_2^\omega\beta_1'\alpha_1$ reduces to comparing $(\beta_1'\alpha_2')^\omega$ and $\beta_2^\omega$. According to \cite[Proposition 5]{mantaci2007extension}, given any $\gamma_1, \gamma_2\in \Sigma^+$ it is sufficient to compare the length-$k'$ suffixes of $\gamma_1^\omega$ and $\gamma_2^\omega$ to determine their co-lex order, where $k'=|\gamma_1|+|\gamma_2|-gcd(|\gamma_1|,|\gamma_2|)$. Our claim easily follows since $|\beta_1'\alpha_2'| = |\alpha_2'\beta_1'| = |\alpha_2|$.
\end{proof}

\begin{corollary}
The co-lex order of the infimum and supremum strings of a DFA is the same as the co-lex order of their length-$(2n)$ suffixes.
\label{cor:sortlength}
\end{corollary}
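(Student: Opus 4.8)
The plan is to derive Corollary~\ref{cor:sortlength} directly from Lemma~\ref{lemma:infsup:string:form} (which constrains the shape of infimum/supremum strings) and Lemma~\ref{lemma:suffix:length} (which bounds the comparison depth for periodic $\omega$-strings). Fix two strings $\alpha,\beta \in \mathcal{K}(\mathcal{A})$; it suffices to show that comparing their length-$(2n)$ suffixes yields the same outcome as comparing $\alpha$ and $\beta$ themselves, equivalently that the longest common suffix of $\alpha$ and $\beta$ has length $< 2n$ whenever $\alpha \neq \beta$ (and that when $\alpha = \beta$ the length-$(2n)$ suffixes also coincide, which is trivial).

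First I would handle the case where at least one of $\alpha,\beta$ is finite. By Lemma~\ref{lemma:infsup:string:form}(1), a finite infimum/supremum string has length $< n \le 2n$, so its entire content lies within its length-$(2n)$ suffix (after the $\#$-padding convention introduced for $suf_k$, which does not alter the co-lex order). Hence comparing length-$(2n)$ suffixes reproduces the true comparison.

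Next I would treat the case where both $\alpha$ and $\beta$ are infinite. By Lemma~\ref{lemma:infsup:string:form}(2) we may write $\alpha = \alpha_2^\omega \alpha_1$ and $\beta = \beta_2^\omega \beta_1$ with $|\alpha_1| + |\alpha_2| < n$ and $|\beta_1| + |\beta_2| < n$; without loss of generality $\alpha_2, \beta_2 \in \Sigma^+$ (if a period were empty the string would be finite). Then $|\alpha_2| + |\beta_2| + \max\{|\alpha_1|,|\beta_1|\} \le (|\alpha_2|+|\alpha_1|) + (|\beta_2|+|\beta_1|) < 2n$, so Lemma~\ref{lemma:suffix:length} applies with a suffix length $k < 2n$, and comparing length-$k$ suffixes decides the co-lex order of $\alpha$ and $\beta$; a fortiori so does comparing their (longer) length-$(2n)$ suffixes. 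Combining the two cases over all pairs in $\mathcal{K}(\mathcal{A})$ gives the claim. The main (minor) obstacle is bookkeeping the padding convention so that "length-$(2n)$ suffix" is well-defined for finite strings and the monotonicity argument "a deeper suffix still decides the order" is applied correctly; there is no genuine difficulty, since all the combinatorial work is already encapsulated in the two lemmas.
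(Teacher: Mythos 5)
Your proposal is correct and follows essentially the same route as the paper: both derive the $2n$ bound by combining Lemma~\ref{lemma:infsup:string:form} (shape and length of infimum/supremum strings) with Lemma~\ref{lemma:suffix:length} (comparison depth $|\alpha_2|+|\beta_2|+\max\{|\alpha_1|,|\beta_1|\} < 2n$ for the periodic case). Your explicit split into finite/infinite cases and the remark on padding only make explicit what the paper leaves implicit.
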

\begin{proof}
By Lemma \ref{lemma:infsup:string:form}, we can represent two infimum/supremum strings $\alpha,\beta$ as $\alpha=\alpha_2^\omega\alpha_1$ and $\beta=\beta_2^\omega\beta_1$. By the same lemma, each of $|\alpha_1|$, $|\alpha_2|$, $|\beta_1|$ and $|\beta_2|$, as well as  $|\alpha_1|+|\alpha_2|$ and $|\beta_1|+|\beta_2|$, are bounded by $n$.
From Lemma \ref{lemma:suffix:length}, it is sufficient to compare the suffixes of length at most $|\alpha_2|+|\beta_2|+\max\{|\alpha_1|,|\beta_1|\}=\max\{(|\alpha_1|+|\alpha_2|)+|\beta_2|,|\alpha_2|+(|\beta_1|+|\beta_2|)\})$ of $\alpha$ and $\beta$ in order to discover their co-lex order.
Therefore, $2n$ is a sufficient suffix length for sorting all the infimum strings correctly. 
\end{proof}

Putting everything together, we conclude:

\begin{lemma}
The infimum and supremum strings of an input-consistent DFA $\mathcal{A}=(Q,\Sigma,\delta,s,F)$ can be computed and sorted in $O(mn)$ time, where $n=|Q|$ is the number of states and $m=|\delta|$ is the number of transitions.
\end{lemma}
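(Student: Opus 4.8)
The plan is to assemble the pieces already laid out in the preceding paragraphs into a single running-time analysis. The algorithm proceeds in iterations $k=0,1,2,\dots$; the invariant proved just above the statement guarantees that at the start of iteration $k$ the infimum strings are sorted by the co-lex order of their length-$k$ suffixes, and each iteration updates $rank_k$ to $rank_{k+1}$ and prunes $\delta_k$ to $\delta_{k+1}$. First I would recall the per-iteration cost: computing $rank_{k+1}$ by two-pass radix sort takes $O(n)$ time, and computing $\delta_{k+1}$ takes $O(|\delta_k|) = O(m)$ time, so each iteration costs $O(m)$ (absorbing the $O(n)$ term since $m \geq n-1$ as every state except $s$ has an incoming transition, or more simply $O(n+m)=O(m)$ under the standing assumption that every state is reachable).

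Next I would bound the number of iterations. By Corollary \ref{cor:sortlength}, the co-lex order of the infimum (and supremum) strings coincides with the co-lex order of their length-$2n$ suffixes; hence after $2n$ iterations the invariant yields the correct final order, and we may stop. Multiplying the $O(m)$ per-iteration cost by the $O(n)$ iterations gives the claimed $O(mn)$ total time. To be careful about the ``same time'' computation of infimum and supremum strings, I would invoke the trick described at the start of Section \ref{sec:sorting}: run the algorithm on the disjoint union of two copies of $\mathcal A$ — one from which we read off infimum strings, the other supremum strings — which only doubles $n$ and $m$ and therefore does not change the asymptotic bound; the resulting sorted order is exactly the order on $\mathcal K(\mathcal A)$, from which each $rank(\inf I_u)$ and $rank(\sup I_u)$ is read directly.

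One subtlety worth spelling out is that the pruning step is well-defined: after step~1 of iteration $k$, all predecessors $v$ of a fixed state $u$ that survive into $\delta_{k+1}^{-1}(u)$ share the same value of $rank_{k+1}(v)$, so in the following iteration the tie-breaking key ``$rank_k(v)$ for any $v \in \delta_k^{-1}(u)$'' is unambiguous; this is what makes the $O(n)$ radix sort legitimate, and it follows by induction on $k$ from the definition of $\delta_{k+1}$. I would also note that pruning never changes the length-$k$ suffix reaching $u$ that is relevant at iteration $k$, because among all predecessors of $u$ the min-rank one contributes precisely the co-lex-smallest length-$k$ suffix, and appending $\lambda(u)$ preserves this; hence $suf_{k+1}(\inf I_u) = suf_k(\inf I_v)\cdot\lambda(u)$ for the surviving predecessor $v$, which is exactly the identity used to establish the invariant.

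I do not expect a genuine obstacle here — all the hard work (the structural Lemma \ref{lemma:infsup:string:form}, the suffix-length bound Lemma \ref{lemma:suffix:length}, and Corollary \ref{cor:sortlength}) is already done. The only point requiring mild care is confirming that terminating after $2n$ iterations is sound: one must observe that once two infimum strings are ordered by their length-$2n$ suffixes, continuing the iteration cannot reorder them, which is immediate since extending a common suffix can only refine, never reverse, a strict co-lex comparison, and Corollary \ref{cor:sortlength} guarantees there are no remaining ties between distinct strings beyond what the length-$2n$ suffixes already resolve (equal length-$2n$ suffixes of two \emph{distinct} infimum/supremum strings would force both to be infinite with the same $\beta^\omega\alpha$ shape, hence actually equal as $\omega$-strings — consistent with Observation \ref{observation:minmax_string}.\ref{obs:9:same:infsuf:for:u:v:infinite} — so they legitimately receive the same rank).
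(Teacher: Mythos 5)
Your proof is correct and follows essentially the same route as the paper: an $O(m)$ cost per iteration (radix sort of ranks plus pruning of $\delta_k$), the length-$2n$ suffix bound of Corollary~\ref{cor:sortlength} capping the number of iterations at $O(n)$, and the two-copy union trick from the start of Section~\ref{sec:sorting} to sort infimum and supremum strings simultaneously. Your extra remarks on the well-definedness of the pruning step and on ties among length-$2n$ suffixes are consistent with the paper's invariant and with Observation~\ref{observation:minmax_string}, so nothing is missing.
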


Equivalently, the above lemma shows that the representation of $<_{\mathcal A}$ of Section \ref{sec:interval representation} can be computed in $O(mn)$ time. Plugging the linear-time chain partition algorithm of Section \ref{sec:chain partition}, we obtain:

\begin{theorem}
Given an input-consistent DFA $\mathcal{A}=(Q,\Sigma,\delta,s,F)$, we can compute a minimum-size chain partition of $<_{\mathcal A}$ in $O(mn)$ time, where $n=|Q|$ is the number of states and $m=|\delta|$ is the number of transitions.
\end{theorem}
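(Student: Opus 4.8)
The plan is to combine the two pieces already established in the excerpt: the $O(mn)$-time computation of the infimum/supremum strings together with their co-lex ranks (the preceding Lemma), and the linear-time chain partitioning procedure of Section~\ref{sec:chain partition}. Concretely, I would first invoke the preceding Lemma to compute, in $O(mn)$ time, the co-lex sorted set $\mathcal K(\mathcal A)$ and hence the pair $(rank(\inf I_u), rank(\sup I_u)) \in [1,2n]\times[1,2n]$ for every state $u\in Q$. This gives the $O(n)$-space representation $S$ of $<_{\mathcal A}$ described in Section~\ref{sec:interval representation}, whose correctness is guaranteed by Theorem~\ref{thm:colex_characterization}.

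Next I would feed $S$ into the linear-time reduction of Section~\ref{sec:chain partition}: map each pair $(s_i,f_i)$ to the interval $[s_i'',f_i'']$ with $s_i'=2s_i+1$, $f_i'=2f_i$, and $s_i''=\min(s_i',f_i')$, $f_i''=\max(s_i',f_i')$. By Lemma~\ref{lemma:reduction}, $f_i\le s_j \iff f_i''<s_j''$, so a family of states forms a chain in $<_{\mathcal A}$ if and only if the corresponding intervals are pairwise non-overlapping; hence a smallest chain partition of $<_{\mathcal A}$ corresponds exactly to an optimal solution of the interval partitioning problem on $S''$. Since the endpoints lie in $[2,4n+1]$, the intervals can be radix-sorted in $O(n)$ time and the greedy interval-partitioning algorithm then runs in amortized $O(1)$ per interval, so this whole phase costs $O(n)$.

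Putting the two phases together, the total running time is $O(mn) + O(n) = O(mn)$, which is the claimed bound. I would remark that the $O(n)$ chain-partition phase is subsumed by the $O(mn)$ string-sorting phase, so no separate accounting is needed.

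There is essentially no substantial obstacle here: the theorem is a direct corollary of the preceding Lemma and the development of Section~\ref{sec:chain partition}, and the only thing to be careful about is making explicit that the preceding Lemma delivers not just the sorted strings but their integer ranks in $[1,2n]$ (which it does, since sorting a set yields the ranks), so that the reduction's input is in the required integer range and radix sort applies. The one-sentence proof is therefore: compute the ranked-pair representation of $<_{\mathcal A}$ in $O(mn)$ time by the preceding Lemma, then apply the $O(n)$-time chain-partition algorithm of Section~\ref{sec:chain partition}; correctness follows from Theorem~\ref{thm:colex_characterization} and Lemma~\ref{lemma:reduction}.
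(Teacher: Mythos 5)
Your proposal is correct and follows essentially the same route as the paper: the theorem is obtained there by combining the preceding lemma (computing and co-lex sorting the infimum/supremum strings, hence the ranked-pair representation of $<_{\mathcal A}$, in $O(mn)$ time) with the linear-time chain-partition algorithm of Section~\ref{sec:chain partition}, whose correctness rests on Theorem~\ref{thm:colex_characterization} and Lemma~\ref{lemma:reduction}, giving $O(mn)+O(n)=O(mn)$ overall. Your added remark that the sorting phase also yields the integer ranks in $[1,2n]$ needed for the radix-sort step is a correct and welcome clarification, but it does not change the argument.
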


\subsection{\texorpdfstring{$O(n^2\log n)$}{O(n\^2 log n)}-time suffix doubling algorithm}
\label{sec:sorting:doubling}

Instead of increasing the length of the sorted suffixes only by 1 at every iteration, we can double it via a generalization of the prefix doubling algorithm \cite{manber:soda90}, the first suffix array construction algorithm that appeared in the literature. 
Again, for simplicity we describe the algorithm just for infimum strings; it is easy to modify it so that it computes and sorts the union of all infimum and supremum strings. 

Algorithm \ref{alg:suffixdoubling} describes our sorting procedure, which we explain in detail in the rest of the section. 
At every iteration $k\ge 0$, Algorithm \ref{alg:suffixdoubling} keeps the infimum strings sorted by their length-$2^k$ suffixes.
Suppose we already sorted the infimum strings with respect to their length-$2^k$ suffixes.
To enable the doubling procedure, we need to show how to compute the length-$2^{k+1}$ suffix of each $\inf I_u$ given as input the length-$2^k$ suffixes of each $\inf I_u$, for all $u\in Q$. 
Given that the infimum strings are sorted by their length-$2^k$ suffixes, for each $u\in Q$ we can achieve this goal by finding a state $v\in Q$ such that 
\[
suf_{2^{k+1}}(\inf I_u) = suf_{2^{k}}(\inf I_v) \cdot suf_{2^{k}}(\inf I_u).
\]
We call such $v$ an \emph{extender of $u$ (at distance $2^k$)}. More formally, the set $P_k(u)$ of all extenders of $u$ at distance $2^k$ is defined as: 
$$
\begin{array}{l}
P_0(u) = \{v\in\delta^{-1}(u): (\forall v'\in \delta^{-1}(u))(\lambda(v)\le\lambda(v')) \}\\
P_k(u)=\{v\in Q: \delta(v,suf_{2^{k}}(\inf I_u))=u\ \wedge\ suf_{2^{k}}(\inf I_v)\sqsubseteq suf_{2^{k+1}}(\inf I_u)\} \ \ for\ k>0
\end{array}
$$
For $u\in Q$, let $rank_{2^k}(u)$ be the co-lex rank of $suf_{2^k}(\inf I_u)$, as defined in the previous section. 
Observe that, by definition, for every $u\in Q$ and $v_1,v_2\in P_k(u)$, $rank_{2^k}(v_1)=rank_{2^k}(v_2)$. 

\begin{algorithm}[t]
\caption{Suffix doubling algorithm for sorting the infimum strings of a DFA}\label{alg:suffixdoubling}
  \KwInput{An input-consistent DFA $\mathcal{A} = (Q,\Sigma,\delta,s,F)$}
  \KwOutput{$rank_{2^k}(u)$ for each $u\in Q$, with $2n \leq 2^k < 4n$.}
  $k\gets 0$\;
  \For{$u\in Q$}{
     $rank_{2^k}(u)\gets\lambda(u)$\; $P_k(u)\gets\{ v\in\delta^{-1}(u)\ :\ \big(\forall v'\in\delta^{-1}(u)\big)\big(\lambda(v)\le\lambda(v')\big)\}$\;
    }
  \While{$2^k<2n$}
  {
    \For{$u\in Q$}{
    $a_u \gets rank_{2^k}(u)$\;
    \eIf{$P_k(u) = \emptyset$}{
        $b_u \gets -\infty$\;
    }{
        Pick any $v\in P_k(u)$\;
        $b_u \gets rank_{2^k}(v)$\; 
    }
    }
    Compute $rank_{2^{k+1}}(\cdot)$ by radix-sorting pairs $(a_u,b_u)$\;
    \For{$u\in Q$}{
        $\hat P_{k+1}(u)\gets\bigcup_{v\in P_k(u)} P_k(v)$\;
        $P_{k+1}(u)\gets\{ v\in \hat P_{k+1}(u) : \big(\forall v'\in \hat P_{k+1}(u) \big)\big(rank_{2^{k+1}}(v)\le rank_{2^{k+1}}(v')\big) \}$\;
    }
    $k\gets k+1$\;
  }
\Return{$rank_{2^k}(\cdot)$} 
\end{algorithm}

We implement a suffix doubling step as follows. Assume $P_k(u)$ and $rank_{2^k}(u)$ have been computed for all $u\in Q$. We associate with $u$ the pair $(a_u,b_u)$ where $a_u=rank_{2^k}(u)$ and $b_u$ is chosen as follows. If $P_k(u)\ne \emptyset$, $b_u=rank_{2^k}(v)$ with any $v\in P_k(u)$; otherwise, $b_u=-\infty$ is chosen\footnote{In this case, there are no more characters to be prepended to $\inf I_u$ (i.e. $|\inf I_u| < 2^k$). Since we radix-sort pairs $(a_u,b_u)$, this choice is consistent with the fact that $suf_{2^k}(\inf I_u)$ is left-padded with copies of symbol $\#$ in order to reach length $2^k$, with $\#<c$ for all $c\in \Sigma$ (see definition of $suf_{2^k}$ in the previous section).}. 
Finally, we compute $rank_{2^{k+1}}(\cdot)$ by radix-sorting pairs $(a_u,b_u)$ in $O(n)$ time.

After computing $rank_{2^{k+1}}(\cdot)$, we need to compute $P_{k+1}(\cdot)$ for the next doubling step.
For a state $u\in Q$, let $
\hat{P}_{k+1}(u)=\bigcup_{u'\in P_{k}(u)}P_k(u')$ be the union of the extender sets of $u$'s extenders at distance $2^{k}$. 
Then, we claim that we can compute $P_{k+1}(u)$ by removing all non-minimum-rank states (i.e. non-minimum $rank_{2^{k+1}}(\cdot)$) from $\hat{P}_{k+1}(u)$.
The correctness of this procedure follows from the fact that $P_{k+1}(u)$ can also be defined as the largest subset of $\hat{P}_{k+1}(u)$ 
such that, for every $v\in P_{k+1}(u)$ and $\hat v\in \hat{P}_{k+1}(u)$, $rank_{2^{k+1}}(v)\le rank_{2^{k+1}}(\hat v)$.  
To see this, first observe that $P_{k+1}(u)\subseteq\hat{P}_{k+1}(u)$ because (i) $v\in \hat{P}_{k+1}(u)$ if and only if $\delta(v,suf_{2^{k+1}}(\inf I_u))=u$, and (ii) $v\in P_{k+1}(u) \Rightarrow \delta(v,suf_{2^{k+1}}(\inf I_u))=u$.
Also, by the definition of $P_{k+1}$, $suf_{2^{k+1}}(\inf I_v)$ for $v\in P_{k+1}(u)$ must not be greater than $suf_{2^{k+1}}(\inf I_{\hat{v}})$ for any $\hat v\in \hat{P}_{k+1}(u)$, which is equivalent to $rank_{2^{k+1}}(v)\le rank_{2^{k+1}}(\hat v)$; otherwise, $ suf_{2^{k+1}}(\inf I_{\hat{v}}) \cdot suf_{2^{k+1}}(\inf I_u)< suf_{2^{k+1}}(\inf I_v)\cdot suf_{2^{k+1}}(\inf I_u)=suf_{2^{k+2}}(\inf I_u)$, which contradicts the definition of $\inf I_u$.

Since $rank_{2^{k+1}}(v)$ has already been computed for all $v\in Q$ and can thus be evaluated in constant time, from the above  characterization of  $P_{k+1}(u)$ we obtain that the time required to compute this set is proportional to the time we spend to compute the union $\hat{P}_{k+1}(u)=\bigcup_{u'\in P_{k}(u)}P_k(u')$. Observe that, if there were repeated states among the sets $P_k(u')$, for $u'\in P_{k}(u)$, then computing such a union could take time $O(n^2)$ (for every $u\in Q$), leading to a cubic algorithm. Luckily, with the next lemma we show that this is not the case: being the input automaton deterministic, those sets are pairwise disjoint and their union can thus be computed by just concatenating them. 

\begin{lemma}
Let $u\in Q$ be a state of a DFA, and let $v_1,v_2(\neq v_1)\in P_k(u)$ be extenders of $u$ at distance $2^k$. Then $P_k(v_1)\cap P_k(v_2)=\emptyset$.
\label{lem:update:pred}
\end{lemma}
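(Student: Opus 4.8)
The plan is to argue by contradiction, exploiting determinism together with the defining property of extenders that all members of $P_k(u)$ share the same length-$2^k$ suffix of their infimum string (equivalently, the same $rank_{2^k}(\cdot)$). Suppose there is a state $w \in P_k(v_1) \cap P_k(v_2)$. By definition of $P_k$, being an extender of $v_1$ at distance $2^k$ means $\delta(w, suf_{2^k}(\inf I_{v_1})) = v_1$, and being an extender of $v_2$ means $\delta(w, suf_{2^k}(\inf I_{v_2})) = v_2$. The idea is to show that these two transition conditions, read out of the same source state $w$, force $v_1 = v_2$, contradicting the hypothesis $v_1 \neq v_2$.

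First I would handle the base case $k=0$ separately, since $P_0$ is defined differently (it picks the min-label predecessors). For $k=0$, $w \in P_0(v_1) \cap P_0(v_2)$ would mean $\delta(w,\lambda(v_1)) = v_1$ and $\delta(w,\lambda(v_2)) = v_2$; but by input consistency every incoming transition to $v_1$ carries label $\lambda(v_1)$ and similarly for $v_2$, and since $v_1, v_2 \in P_0(u)$ we have $\delta(v_1, \cdot) $ and $\delta(v_2,\cdot)$ both reaching $u$ with $\lambda(v_1) = \lambda(v_2)$ (all min-label predecessors of $u$ share the label). Hence $\lambda(v_1) = \lambda(v_2)$, and determinism of $\delta$ from $w$ on this common character gives $v_1 = \delta(w,\lambda(v_1)) = \delta(w,\lambda(v_2)) = v_2$, a contradiction. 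For the inductive/general case $k > 0$, the key observation is that $v_1, v_2 \in P_k(u)$ implies $rank_{2^k}(v_1) = rank_{2^k}(v_2)$ (noted just before the lemma statement), i.e. $suf_{2^k}(\inf I_{v_1}) = suf_{2^k}(\inf I_{v_2})$. Call this common string $\sigma$. Then $w \in P_k(v_1)$ gives $\delta(w,\sigma) = v_1$ and $w \in P_k(v_2)$ gives $\delta(w,\sigma) = v_2$; since $\mathcal A$ is deterministic, $\delta(w,\sigma)$ is a single well-defined state, so $v_1 = v_2$ — contradiction.

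The main obstacle, and the step I would be most careful about, is making airtight the claim that all extenders in $P_k(u)$ have equal length-$2^k$ suffix of their infimum string. For $k>0$ this is essentially immediate from the definition $P_k(u) = \{v : \delta(v, suf_{2^k}(\inf I_u)) = u \wedge suf_{2^k}(\inf I_v) \sqsubseteq suf_{2^{k+1}}(\inf I_u)\}$: since $suf_{2^{k+1}}(\inf I_u)$ has length $2^{k+1} = 2 \cdot 2^k$ and $suf_{2^k}(\inf I_v)$ has length exactly $2^k$ and must be a prefix of it, it is forced to equal the length-$2^k$ prefix of $suf_{2^{k+1}}(\inf I_u)$, which is the same for all $v \in P_k(u)$; hence their $rank_{2^k}$ values coincide. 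I would also double-check the edge case where $suf_{2^k}(\inf I_u)$ (or $\sigma$ above) involves padding symbols $\#$ when $|\inf I_u| < 2^k$ — but since the padding is a deterministic left-extension and the transition function argument only uses that $\delta(w,\sigma)$ is single-valued (applied to the unpadded real characters, with the padded $\#$-prefix corresponding to "no predecessor" which then just means $w$ itself is forced), the contradiction still goes through; in fact if $\sigma$ is $\#$-padded then $P_k(v_i)$ would be empty and the intersection is trivially empty. With these pieces the proof is short: it is really just "determinism means $\delta(w,\sigma)$ is one state."
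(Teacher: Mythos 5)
Your proof is correct and follows essentially the same route as the paper's: assume a common extender $w\in P_k(v_1)\cap P_k(v_2)$, note that $v_1,v_2\in P_k(u)$ forces $suf_{2^k}(\inf I_{v_1})=suf_{2^k}(\inf I_{v_2})$ (both equal the length-$2^k$ prefix of $suf_{2^{k+1}}(\inf I_u)$), and conclude $v_1=\delta(w,\sigma)=v_2$ by determinism, a contradiction. Your explicit treatment of the $k=0$ case and of the $\#$-padding edge case are harmless refinements of the same argument, which the paper handles implicitly.
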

\begin{proof}
Let $v_1,v_2 (\neq v_1)\in P_k(u)$ be extenders of the same state $u\in Q$ at distance $2^k$.
Let $\alpha_1=suf_{2^k}(\inf I_{v_1})$ and $\alpha_2=suf_{2^k}(\inf I_{v_2})$.
Assume, for a contradiction, that there exists $v'\in P_k(v_1)\cap P_k(v_2)$.
By definition of $P_k$, since $v'\in P_k(v_1)$, it holds $\delta(v',\alpha_1)=v_1$.
Similarly, it also holds $\delta(v',\alpha_2)=v_2$.
Since $v_1,v_2\in P_k(u)$ are extenders of the same state $u\in Q$ at distance $2^k$, both $\alpha_1$ and $\alpha_2$ are equal to the length-$2^k$ prefix of $suf_{2^{k+1}}(\inf I_u)$, therefore $\alpha_1=\alpha_2$.
Consequently, we have $v_1=\delta(v',\alpha_1)=\delta(v',\alpha_2)=v_2$, i.e. reading a string $\alpha_1=\alpha_2$ from a state $v'$ we reach two distinct states $v_1\neq v_2$. This is a contradiction with the fact that the automaton is deterministic, so the claim $P_k(v_1)\cap P_k(v_2) = \emptyset$ must be true. 
\end{proof}

From Lemma \ref{lem:update:pred}, we can compute $\hat P_{k+1}(u)$ in time proportional to its cardinality $|\hat P_{k+1}(u)|\le n$. Since finding the minimum-rank states can be done in linear $O(|\hat P_{k+1}(u)|)$ time as well, the computation of $P_{k+1}(u)$ takes time $O(n)$ for each $u\in Q$.
We conclude that each iteration of the suffix doubling algorithm takes $O(n^2)$ time.

It is worth noting that, since we keep track of each set $P_k(u)$, 
the running time of an iteration is lower-bounded by the total number of extenders therein. 
In the worst case, however, the total number of extenders at a single iteration could be truly quadratic even on acyclic DFAs: see Figure \ref{fig:num:pred}; in this example, there are $n=4\sigma+2$ states and $\sigma^2+2\sigma+1=\Theta(n^2)$ extenders at distance $2^k$ for $k=1$.

\begin{figure}[t]
\centering
\includegraphics[page=1]{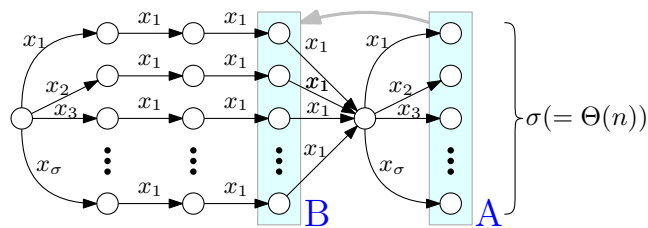}
\caption{The DFA that has a quadratic number of extenders: the $\sigma = \Theta(n)$ states in the rightmost column (indicated with A) have $\sigma  = \Theta(n)$ extenders each (indicated with B) at distance $2^k$, where $k=1$.}
\label{fig:num:pred}
\end{figure}

Putting all together, we have the following result for the suffix doubling algorithm described in this section.

\begin{lemma}
The infimum and supremum strings of an input-consistent DFA $\mathcal{A}=(Q,\Sigma,\delta,s,F)$ can be computed and sorted in $O(n^2\log n)$ time, where $n=|Q|$ is the number of states.
\end{lemma}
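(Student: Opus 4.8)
The plan is to prove the $O(n^2\log n)$ bound by combining the per-iteration cost analysis with the bound on the number of iterations that Corollary \ref{cor:sortlength} provides. First I would argue that the algorithm is correct: by induction on $k$, at the start of iteration $k$ the infimum strings are sorted by their length-$2^k$ suffixes, i.e. $rank_{2^k}(u)$ correctly encodes the co-lex rank of $suf_{2^k}(\inf I_u)$, and the set $P_k(u)$ of extenders is correctly maintained. The base case $k=0$ is immediate from the initialization (label-sorting gives the rank of the length-$1$ suffix, and $P_0(u)$ picks exactly the min-label predecessors). For the inductive step, the identity $suf_{2^{k+1}}(\inf I_u)=suf_{2^k}(\inf I_v)\cdot suf_{2^k}(\inf I_u)$ for any $v\in P_k(u)$ shows that the pair $(a_u,b_u)=(rank_{2^k}(u),rank_{2^k}(v))$ is an order-preserving encoding of $suf_{2^{k+1}}(\inf I_u)$, so radix-sorting these pairs yields $rank_{2^{k+1}}$; the $b_u=-\infty$ case is consistent with the $\#$-padding convention. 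The correctness of the update $P_{k+1}(u)=\{v\in\hat P_{k+1}(u):rank_{2^{k+1}}(v)\text{ is minimum}\}$ is exactly the argument already given in the text preceding Lemma \ref{lem:update:pred}.

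Second, I would bound the cost of a single iteration. Computing the pairs $(a_u,b_u)$ and radix-sorting them is $O(n)$ since the ranks lie in $[1,n]$. For the $P_{k+1}$ update, the key point is Lemma \ref{lem:update:pred}: the sets $P_k(v)$ for $v\in P_k(u)$ are pairwise disjoint, so $\hat P_{k+1}(u)=\bigcup_{v\in P_k(u)}P_k(v)$ can be formed by simple concatenation in time $O(|\hat P_{k+1}(u)|)$, and then filtering to the minimum-rank states takes the same time; since $\hat P_{k+1}(u)\subseteq Q$ we have $|\hat P_{k+1}(u)|\le n$, giving $O(n)$ per state and $O(n^2)$ for the whole iteration. (As remarked in the text, without disjointness this union could blow up to $\Theta(n^2)$ per state, and the example in Figure \ref{fig:num:pred} shows the total extender count per iteration is genuinely quadratic in the worst case, so $O(n^2)$ per iteration is the right bound.)

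Finally, I would count the iterations. The while-loop runs while $2^k<2n$, so it terminates with $2n\le 2^k<4n$, i.e. after $\lceil\log_2(2n)\rceil=O(\log n)$ iterations. By Corollary \ref{cor:sortlength}, the co-lex order of the infimum (and supremum) strings coincides with the co-lex order of their length-$2n$ suffixes; since the final iteration has sorted the strings by a suffix of length $2^k\ge 2n$, the returned $rank_{2^k}(\cdot)$ is the correct co-lex order. Multiplying $O(n^2)$ per iteration by $O(\log n)$ iterations yields the claimed $O(n^2\log n)$ total time. To obtain both infimum and supremum strings simultaneously, run the algorithm on the disjoint union of two copies of $\mathcal A$ as described at the start of Section \ref{sec:sorting}, which only doubles $n$ and hence does not affect the asymptotic bound.

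I do not expect a serious obstacle here: correctness and the per-iteration cost have essentially already been established in the surrounding text and in Lemmas \ref{lem:update:pred} and \ref{lemma:infsup:string:form}, so the proof is mostly a matter of assembling these pieces. The one point deserving a careful sentence is why $2^k\ge 2n$ suffices as a stopping condition — this is precisely Corollary \ref{cor:sortlength}, so I would cite it explicitly rather than re-deriving the length bound.
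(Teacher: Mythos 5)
Your proposal is correct and follows essentially the same route as the paper: correctness via the doubling invariant and the $P_{k+1}$ update argument, $O(n^2)$ per iteration using the disjointness of extender sets (Lemma \ref{lem:update:pred}) and $O(n)$ radix sorting of rank pairs, $O(\log n)$ iterations justified by Corollary \ref{cor:sortlength}, and the two-copy trick for handling infimum and supremum strings together. No gaps to report.
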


Equivalently, the above lemma shows that the representation of $<_{\mathcal A}$ of Section \ref{sec:interval representation} can be computed in $O(n^2\log n)$ time. Plugging the linear-time chain partition algorithm of Section \ref{sec:chain partition}, we obtain:

\begin{theorem}
Given an input-consistent DFA $\mathcal{A}=(Q,\Sigma,\delta,s,F)$, we can compute a minimum-size chain partition of $<_{\mathcal A}$ in $O(n^2\log n)$ time, where $n=|Q|$ is the number of states.
\end{theorem}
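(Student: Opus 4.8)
The proof is a direct assembly of the components established above. First I would invoke the immediately preceding lemma, which computes and sorts all infimum and supremum strings of $\mathcal A$ in $O(n^2\log n)$ time; concretely, this produces $rank(\inf I_u)$ and $rank(\sup I_u)$ for every $u\in Q$ \emph{within a single total order on $\mathcal K(\mathcal A)$}, and hence the $O(n)$-space representation $S=\{(rank(\inf I_u), rank(\sup I_u)):u\in Q\}\subseteq[1,2n]\times[1,2n]$ of Section \ref{sec:interval representation}. Recall that, as explained at the start of Section \ref{sec:sorting}, ranking both families into one common order is achieved by running Algorithm \ref{alg:suffixdoubling} on the disjoint union of two copies of $\mathcal A$, extracting infimum strings from one copy and supremum strings from the other (on the supremum copy one simply keeps the maximum-label, resp.\ maximum-rank, predecessors instead of the minimum ones); since the union has $2n$ states and $2m$ transitions, this leaves the asymptotics unchanged.

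Next, Theorem \ref{thm:colex_characterization} guarantees that $S$ faithfully encodes $<_{\mathcal A}$, since $u<_{\mathcal A} v$ holds exactly when $\sup I_u\le \inf I_v$; equivalently, $<_{\mathcal A}$ is precisely the interval order induced by the intervals $[rank(\inf I_u), rank(\sup I_u)]$. I would then apply the linear-time reduction of Lemma \ref{lemma:reduction} from the smallest chain partition problem on $S$ to the interval partitioning problem, and solve the latter with the greedy algorithm described in Section \ref{sec:chain partition}: the reduction and the radix sort of the $O(n)$ integer coordinates (which lie in the range $[2,4n+1]$ after the transformation) both take $O(n)$ time, after which the greedy pass spends amortized constant time per interval and outputs a minimum-size chain partition of $<_{\mathcal A}$ (by Dilworth's theorem, this partition has $\text{width}(\mathcal A)$ classes).

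Summing the costs, the total running time is $O(n^2\log n)+O(n)=O(n^2\log n)$, as claimed. There is essentially no new technical obstacle here — the substantive work was carried out in the preceding lemma (the suffix-doubling sorting procedure, whose correctness rests on Lemma \ref{lem:update:pred} and the bound of Corollary \ref{cor:sortlength}) and in Section \ref{sec:chain partition}. The one point that deserves care is ensuring that infimum and supremum strings are ranked within a \emph{common} total order, so that the cross-comparison $\sup I_u\le\inf I_v$ of Theorem \ref{thm:colex_characterization} remains meaningful once translated to integer ranks; this is exactly what the two-copies construction guarantees.
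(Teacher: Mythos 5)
Your proposal is correct and follows the same route as the paper: it combines the preceding $O(n^2\log n)$-time suffix-doubling lemma (with the two-copies trick for ranking infimum and supremum strings in one common order, exactly as described at the start of Section \ref{sec:sorting}) with the interval characterization of Theorem \ref{thm:colex_characterization} and the linear-time chain-partitioning algorithm of Section \ref{sec:chain partition}. The paper's own proof is precisely this composition, so no further comment is needed.
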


\noindent 
{\textbf{The suffix doubling algorithm in practice}}
Although every iteration of the suffix doubling algorithm needs to keep track of $O(n^2)$ extenders per iteration in the worst case, we conjecture that it is not likely to have a quadratic number of extenders on realistic datasets. To demonstrate this, we conducted a brief experiment using a pan-genomic graph, which is considered to be one of the most important real-world applications of our problem. We downloaded the Chromosome 22 sequence of the GRCh38 human reference genome and its variation data from 1000 Genome project \cite{data:pangenome}. This variation dataset contains a set of substitutions, insertions and deletions appearing on the reference human genome sequence collected from 2,548 samples.
Using this dataset, we constructed a pan-genomic graph using \textsf{VG} \cite{siren19bioinformatics}, then converted it into a DFA using the classical powerset construction algorithm \cite{nfadfa:powerset}.
We ran an implementation of our suffix doubling algorithm to sort the infimum and supremum strings and measured the number of extenders at each iteration.
The largest $\hat P_k(u)$ and $P_k(u)$ (extenders at distance $2^k$ before/after filtering non-minimum-rank states) during the procedure had cardinality 60 and 37, respectively, which might be considered not negligible but quite small when compared to the DFA's size ($n$=51,904,782, $m=$53,049,316). 
In addition, the sum $\sum_{u\in Q} |\hat P_k(u)|$ of the number of extender candidates (the union of extenders before filtering non-minimum-rank states) at any fixed distance $2^k$ was at 
most two times the number of edges, suggesting that in practice our algorithm exhibits a linearithmic complexity on pan-genomic graphs. 
C++ source code is avaliable at: \url{https://github.com/regindex/DFA-suffix-doubling}.

\subsection{\texorpdfstring{$O(m\log n)$}{O(m log n)}-time algorithm for acyclic DFAs}

If the input DFA is acyclic, then we can sort the infimum strings more efficiently using the algorithm described in \cite[Sec. 3.2]{alanko:soda20}.
This algorithm processes the states of any acyclic \emph{Wheeler DFA} $\mathcal A$ (that is, $width(\mathcal A)=1$) and their incoming edges in any topological order $u_1, \dots, u_n$ while updating $<_{\mathcal A}$ in an online fashion; more precisely, as soon as step $1\leq i \leq n$ has finished, the algorithm has computed the total order $<_{\mathcal A}$ of the set $\{u_1, \dots, u_i\}$.
The basic idea is to process the states in any topological order while maintaining a dynamic data structure that stores the relative co-lex order of the states according to any representative of $I_u$ (in fact, \cite{alanko:soda20} proves that on Wheeler DFAs, any  string in $I_u$ can be chosen as a representative of the whole $I_u$ to sort the automaton's states).
This is possible because, when $u_i\in Q$ is being processed, the structure is able to check if $rank(v)\le rank(u_j)$ for any $v\in\delta^{-1}(u_i)$ and $j<i$, where $rank(v)$ denotes the position of $v$ in the total order $<_{\mathcal A}$ of the already-processed states $u_1, \dots, u_{i-1}$ (and, by definition of topological order, $rank(v)$ and $rank(u_j)$ have already been computed in the previous steps). This information is sufficient to compute $rank(u_i)$ among the sorted $u_1, \dots, u_{i}$.

In our case (arbitrary acyclic DFAs), we use the above data structure as follows: after topologically sorting $\mathcal A$ (in linear time) we process states in this order. 
When processing state $u_i$,  assume that states $u_1, \dots, u_{i-1}$ have already been co-lex sorted according to their strings $\inf I_{u_1}, \dots, \inf I_{u_{i-1}}$ using the data structure of \cite[Section 3.2]{alanko:soda20}.
By scanning the predecessors of $u_i$, we find the min-rank state $v^*=\arg\min_{v\in \delta^{-1}(u_i)} rank(v)$ among them.
At this point, we insert state $u_i$, as well as transition (labeled edge) $(v^*,u_i, \lambda(u_i))$, in the data structure. Note that, since only $(v^*,u_i, \lambda(u_i))$ is inserted, the data structure  of \cite[Section 3.2]{alanko:soda20} maintains a spanning tree of $\mathcal{A}$ rooted at the start state $s$.
By construction, it is easy to see that the unique path connecting $s$ to $u_i$ in this spanning tree is labeled with string $\inf I_{u_i}$: the spanning tree encodes the infimum strings. As a result, our sorting problem is equivalent to sorting this spanning tree.
It is known that a labeled tree is a special case of Wheeler graphs (see \cite{gagie:tcs17}), so the computed co-lex node order of this spanning tree is precisely the co-lex order of all infimum strings. 

Since it is immediate to extend the idea to the union of all infimum and supremum strings, taking into account the cost of each update of the data structure \cite[Sec. 3.2]{alanko:soda20},  we obtain: 

\begin{lemma}
The infimum and supremum strings of an input-consistent acyclic DFA $\mathcal{A}=(Q,\Sigma,\delta,s,F)$ can be computed and sorted in $O(m\log n)$ time where $n=|Q|$ is the number of states and $m=|\delta|$ is the number of transitions.
\end{lemma}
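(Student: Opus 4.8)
The plan is to turn the informal outline preceding the statement into a rigorous argument, separating \emph{correctness} of the construction from its \emph{running time}.

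The load-bearing first step is a recursive identity for infimum strings. Since $\mathcal A$ is input-consistent, every edge entering a state $u\neq s$ carries the label $\lambda(u)$, so $I_u=\bigcup_{v\in\delta^{-1}(u)}I_v\cdot\lambda(u)$ with $\delta^{-1}(u)\neq\emptyset$ (as $u$ is reachable and $s$ has no in-edges). Working in the total order $(\Sigma^*\cup\Sigma^\omega,<)$, appending a fixed symbol is an order isomorphism and the infimum of a union is the minimum of the infima; hence $\inf I_u=\big(\min_{v\in\delta^{-1}(u)}\inf I_v\big)\cdot\lambda(u)$, and $\inf I_s=\epsilon$. Moreover, since $\mathcal A$ is acyclic no path repeats a state, so by Lemma~\ref{lemma:infsup:string:form} every $\inf I_u$ is \emph{finite} of length $<n$, and by Observation~\ref{observation:minmax_string}.\ref{obs:9:same:infsuf:for:u:v:infinite} the infimum strings of distinct states are pairwise distinct, so the ``min-rank predecessor'' used below is unique.

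Next I would set up the correctness invariant. The algorithm processes the states in a fixed topological order $u_1=s,u_2,\dots,u_n$ of $\mathcal A$ (computed in $O(m)$ time) and, for each $u_i$, selects $v^*_{u_i}=\arg\min_{v\in\delta^{-1}(u_i)}rank(v)$ using the current state of the data structure of \cite[Sec.~3.2]{alanko:soda20}, then declares $u_i$'s unique incoming edge to be $(v^*_{u_i},u_i,\lambda(u_i))$ and inserts both into the structure. By induction on $i$, I claim that after step $i$: (a) the selected edges form a spanning tree of the sub-DFA on $\{u_1,\dots,u_i\}$ rooted at $s$ (well-defined: $v^*_{u_i}$ precedes $u_i$ topologically, hence is already present; the tree is deterministic and input-consistent since each node has at most one in-edge and two tree-edges out of the same node must have distinct labels because $\mathcal A$ is a DFA); (b) the $s$-to-$u_j$ path in this tree spells the finite string $\inf I_{u_j}$ for all $j\le i$; and (c) the structure stores exactly the co-lex order of $\inf I_{u_1},\dots,\inf I_{u_i}$. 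In the inductive step, all predecessors of $u_i$ lie in $\{u_1,\dots,u_{i-1}\}$, so by (c) the queried ranks are the true co-lex ranks of the corresponding infimum strings; thus $v^*_{u_i}$ minimizes $\inf I_v$ over $v\in\delta^{-1}(u_i)$, and by the recursive identity $\inf I_{u_i}=\inf I_{v^*_{u_i}}\cdot\lambda(u_i)$, which together with (b) for $v^*_{u_i}$ gives the path claim for $u_i$. Since a labeled tree is a Wheeler graph \cite{gagie:tcs17} whose unique Wheeler node order is precisely the co-lex order of root-to-node path labels, feeding the new (state, edge) pair to the online routine of \cite[Sec.~3.2]{alanko:soda20} correctly extends the stored order to include $\inf I_{u_i}$, re-establishing (c). After step $n$ the structure yields $rank(\inf I_u)$ for every $u$; the analogous construction for supremum strings (using maximum-rank predecessors), carried out on the disjoint union of the two copies as noted at the start of Section~\ref{sec:sorting}, simultaneously produces $rank(\sup I_u)$ — i.e. the $O(n)$-space representation of $<_{\mathcal A}$ from Section~\ref{sec:interval representation}.

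For the running time: the topological sort costs $O(m)$; over the whole execution each of the $m$ edges of $\mathcal A$ is inspected once (to locate the min-rank predecessor), each inspection costing one rank comparison in the order-maintenance component of the structure, and we perform exactly $n-1$ insertions of a (state, edge) pair, each costing $O(\log n)$ by \cite[Sec.~3.2]{alanko:soda20}; the total is $O(m+n\log n)=O(m\log n)$. The hard part will not be computational but the careful interfacing with the black-box structure of \cite{alanko:soda20}: it is stated for Wheeler DFAs, so I must argue that the \emph{incrementally} built spanning tree is at every moment a labeled tree — hence a Wheeler DFA to which the online sorting routine applies, and whose maintained order is genuinely the co-lex order of the infimum strings (invariant (c)) rather than the order of some arbitrary accepting strings — and that the one extra primitive we need, namely finding a minimum-rank predecessor of $u_i$ \emph{before} $u_i$ is inserted, is supported within the claimed bound. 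Pinning down (c) is the crux, and it is exactly the recursive identity of the first step that forces the greedy ``pick the min-rank predecessor'' choice to rebuild the infimum strings.
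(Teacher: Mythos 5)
Your proposal is correct and follows essentially the same route as the paper's proof: process states in topological order, pick the minimum-rank predecessor, insert the resulting spanning-tree edge into the online structure of Alanko et al., and use the fact that a labeled tree is Wheeler so the maintained order is the co-lex order of the infimum (resp.\ supremum) strings, for a total of $O(m\log n)$ time. You merely spell out the recursive identity $\inf I_u=\bigl(\min_{v\in\delta^{-1}(u)}\inf I_v\bigr)\cdot\lambda(u)$ and the induction that the paper leaves as ``by construction'', and your cost accounting ($O(m+n\log n)$ versus the paper's $O(m\log n)$ with $O(\log n)$ per rank query) lands on the same stated bound either way.
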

\begin{proof}
First of all, the data structure \cite[Sec. 3.2]{alanko:soda20} supports the following two operations in $O(\log n)$ time\footnote{In fact, the running time per operation is $O(\log m')$, where $m'$ is the number of edges. However, in our case, the number of transitions inserted into the data structure is only $m' = n-1$ since we insert one transition per state (except $s$). Therefore, in our case, the time taken per operation is $O(\log n)$.}:
(i) computing the relative rank of a state among those that are already processed, and (ii) inserting a new edge (a state is inserted into the structure after all its incoming edges have been inserted).
As a result, finding  $v^*=\arg\min_{v\in \delta^{-1}(u_i)} rank(v)$ takes time $O(|\delta^{-1}(u_i)|\cdot \log n)$.  
After $v^*$ has been found, inserting the labeled edge $(v^*,u_i, \lambda(u_i))$, as well as state $u_i$, into the structure takes time $O(\log n)$. 
Overall, after all states have been processed the cost of the above operations amounts to  $O(m\log n)$ time.
Since a topological order of $\mathcal A$ can be computed in $O(m)$ time, the total running time is $O(m\log n)$.
\end{proof}

Equivalently, the above lemma shows that the representation of $<_{\mathcal A}$ of Section \ref{sec:interval representation} can be computed in $O(m\log n)$ time when $\mathcal A$ is acyclic. Plugging the linear-time chain partition algorithm of Section \ref{sec:chain partition}, we obtain:

\begin{theorem}
Given an input-consistent acyclic DFA $\mathcal{A}=(Q,\Sigma,\delta,s,F)$, we can compute a minimum-size chain partition of $<_{\mathcal A}$ in $O(m\log n)$ time, where $n=|Q|$ is the number of states and $m=|\delta|$ is the number of transitions.
\end{theorem}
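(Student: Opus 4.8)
The plan is to obtain this statement as a straightforward combination of the two ingredients developed earlier: the preceding lemma, which computes and co-lex sorts all infimum and supremum strings of an acyclic DFA in $O(m\log n)$ time, and the linear-time procedure of Section~\ref{sec:chain partition}, which builds a smallest chain partition of $<_{\mathcal A}$ from its ranked-pair representation.

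First I would invoke the sorting procedure of the preceding lemma on $\mathcal A$. As noted at the start of Section~\ref{sec:sorting}, by duplicating the automaton one can compute and sort the union $\mathcal K(\mathcal A) = \{\inf I_u : u \in Q\} \cup \{\sup I_u : u \in Q\}$ within the same asymptotic bound; after this step, for every state $u$ we know the integers $rank(\inf I_u)$ and $rank(\sup I_u)$, both lying in $[1,2n]$ since $|\mathcal K(\mathcal A)|\le 2n$. This is exactly the $O(n)$-space representation $S=\{(rank(\inf I_u),rank(\sup I_u)):u\in Q\}$ of $<_{\mathcal A}$ from Section~\ref{sec:interval representation}.

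Next I would feed $S$ into the linear-time chain-partitioning algorithm of Section~\ref{sec:chain partition}: apply the reduction of Lemma~\ref{lemma:reduction} to turn $S$ into an instance of the interval partitioning problem (Definition~\ref{def:interval-partitioning}) whose endpoints lie in $[2,4n+1]$, radix-sort the $n$ intervals by starting time in $O(n)$ time, and run the greedy device-assignment that scans intervals in non-decreasing order of starting time, reusing an idle device when one is available and opening a new device otherwise. By Lemma~\ref{lemma:reduction} together with Theorem~\ref{thm:colex_characterization}, two intervals are non-overlapping precisely when the corresponding states are comparable in $<_{\mathcal A}$, so the set of activities assigned to a common device forms a chain of $<_{\mathcal A}$; the optimality of the greedy method (equivalently, the number of devices equals the maximum number of mutually overlapping intervals, which by Dilworth's theorem is $\text{width}(<_{\mathcal A})$) ensures that the resulting chain partition has minimum size. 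This phase costs $O(n)$ time.

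Adding the two phases gives $O(m\log n)+O(n)=O(m\log n)$, where we use the standing assumption that every state is reachable from $s$ (hence $m\ge n-1$) and the fact that the topological order required by the sorting procedure is obtained in $O(m)$ time. I do not expect a genuine obstacle; the only point requiring care is that the sorting step must rank infimum and supremum strings inside a single common total order rather than two separate ones, which is precisely why the sort is performed on the union $\mathcal K(\mathcal A)$ instead of on the two families independently.
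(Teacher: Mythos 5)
Your proposal is correct and matches the paper's own argument: the theorem is obtained exactly by combining the preceding lemma (computing and co-lex sorting the union of infimum and supremum strings of the acyclic DFA in $O(m\log n)$ time, yielding the ranked-pair representation of $<_{\mathcal A}$) with the linear-time interval-partitioning chain decomposition of Section~\ref{sec:chain partition}. The extra details you supply (sorting the union $\mathcal K(\mathcal A)$ jointly, the $[2,4n+1]$ endpoint range, and $m\ge n-1$ to absorb the $O(n)$ term) are all consistent with the paper.
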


%%
%% Bibliography
%%

\end{document}